\documentclass[reqno,11pt]{amsart}
\usepackage{amsmath, latexsym, amsfonts, amssymb, amsthm, amscd}
\usepackage[utf8]{inputenc}
\usepackage{graphics,epsf,psfrag,epsfig}

\setlength{\oddsidemargin}{5mm}
\setlength{\evensidemargin}{5mm}
\setlength{\textwidth}{150mm}
\setlength{\headheight}{0mm}
\setlength{\headsep}{12mm}
\setlength{\topmargin}{0mm}
\setlength{\textheight}{220mm}
\setcounter{secnumdepth}{2}
\newcommand{\Ey}{\mathbb{E}^{Y,\rho'} }

\numberwithin{equation}{section}

\newtheorem{theorem}{Theorem}[section]
\newtheorem{lemma}[theorem]{Lemma}
\newtheorem{proposition}[theorem]{Proposition}
\newtheorem{cor}[theorem]{Corollary}
\newtheorem{rem}[theorem]{Remark}

\renewcommand{\ge}{\geq}
\renewcommand{\le}{\leq}
\newcommand{\ind}{\mathbf{1}}

\newcommand{\R}{\mathbb{R}}
\newcommand{\Z}{\mathbb{Z}}

\renewcommand{\tilde}{\widetilde}

\newcommand{\cA}{{\ensuremath{\mathcal A}} }

\newcommand{\cT}{{\ensuremath{\mathcal T}} }

\newcommand{\bP}{{\ensuremath{\mathbf P}} }
\newcommand{\bE}{{\ensuremath{\mathbf E}} }


\DeclareMathSymbol{\leqslant}{\mathalpha}{AMSa}{"36} 
\DeclareMathSymbol{\geqslant}{\mathalpha}{AMSa}{"3E} 
\DeclareMathSymbol{\eset}{\mathalpha}{AMSb}{"3F}     
\renewcommand{\leq}{\;\leqslant\;}                   
\renewcommand{\geq}{\;\geqslant\;}                   
\newcommand{\dd}{\,\text{\rm d}}             



\newcommand{\bbE}{{\ensuremath{\mathbb E}} }

\newcommand{\bbP}{{\ensuremath{\mathbb P}} }

\newcommand{\bbR}{{\ensuremath{\mathbb R}} }

\newcommand{\bbZ}{{\ensuremath{\mathbb Z}} }


\newcommand{\gb}{\beta}
\newcommand{\gd}{\delta}
\newcommand{\gep}{\varepsilon}       

\newcommand{\gz}{\zeta}

\newcommand{\gk}{\kappa}

\newcommand{\gO}{\Omega}
\newcommand{\gl}{\lambda}

\makeatletter
\def\captionfont@{\footnotesize}
\def\captionheadfont@{\scshape}

\long\def\@makecaption#1#2{%
  \vspace{2mm}
  \setbox\@tempboxa\vbox{\color@setgroup
    \advance\hsize-6pc\noindent
    \captionfont@\captionheadfont@#1\@xp\@ifnotempty\@xp
        {\@cdr#2\@nil}{.\captionfont@\upshape\enspace#2}%
    \unskip\kern-6pc\par
    \global\setbox\@ne\lastbox\color@endgroup}%
  \ifhbox\@ne 
    \setbox\@ne\hbox{\unhbox\@ne\unskip\unskip\unpenalty\unkern}%
  \fi
  \ifdim\wd\@tempboxa=\z@ 
    \setbox\@ne\hbox to\columnwidth{\hss\kern-6pc\box\@ne\hss}%
  \else 
    \setbox\@ne\vbox{\unvbox\@tempboxa\parskip\z@skip
        \noindent\unhbox\@ne\advance\hsize-6pc\par}%
\fi
  \ifnum\@tempcnta<64 
    \addvspace\abovecaptionskip
    \moveright 3pc\box\@ne
  \else 
    \moveright 3pc\box\@ne
    \nobreak
    \vskip\belowcaptionskip
  \fi
\relax
}
\makeatother
\def\writefig#1 #2 #3 {\rlap{\kern #1 truecm
\raise #2 truecm \hbox{#3}}}


\newcommand{\tf}{\textsc{f}}

\renewcommand{\a}{\mathrm{ann}}
\newcommand{\p}{\mathrm{pin}}

\title[Effect of disorder for Randow Walk Pinning Model]{The effect of disorder on the free-energy for the Random Walk Pinning Model: smoothing of the phase transition and low temperature asymptotics}

\begin{document}

\author{Quentin Berger}
\address{
Laboratoire de Physique, ENS Lyon,  Universit\'e de Lyon, 46 All\'ee d'Italie, 
69364 Lyon, France
}
\email{quentin.berger@ens.fr}

\author{Hubert Lacoin}
\address{Università degli Studi “Roma Tre”,
Largo San Leonardo Murialdo,
00146 Roma}
\email{lacoin@math.jussieu.fr}

\begin{abstract}
  We consider the continuous time version of the {\sl Random Walk Pinning Model} (RWPM), studied in \cite{BT, BS08, BS09}.
  Given a fixed realization of a random walk $Y$ on $\Z^d$ with jump rate $\rho$ (that plays the role of the random medium),
  we modify the law of a random walk $X$ on $\Z^d$ with jump rate $1$ by reweighting the paths, giving an energy reward
  proportional to the intersection time $L_t(X,Y)=\int_0^t \ind_{X_s=Y_s}\dd s$:
  the weight of the path under the new measure is $\exp(\gb L_t(X,Y))$, $\gb \in \bbR$.
  As $\beta$ increases, the system exhibits a delocalization/localization transition:
  there is a critical value $\beta_c$, such that if $\gb>\gb_c$
  the two walks stick together for almost-all $Y$ realizations.
  A natural question is that of disorder relevance, that is whether the
  {\sl quenched} and {\sl annealed} systems have the same behavior. 
  In this paper we investigate how the disorder modifies the shape of the free energy curve:
  (1) We prove that, in dimension $d\geq 3$, the presence of disorder makes
  the phase transition at least of second order. This, in dimension $d\ge 4$, contrasts with the fact that the phase
  transition of the annealed system is of first order. (2) In any dimension, we prove that disorder
  modifies the low temperature asymptotic of the free energy.
\\
\\
2000 \textit{Mathematics Subject Classification: 82B44, 60K37, 60K05
  }\\
  \textit{Keywords: Pinning/Wetting Models, Polymer, Disordered Models, Harris Criterion,
    Smoothing/Rounding Effect.}
\end{abstract}

\maketitle

\section{Model and results}

\subsection{The random walk pinning model}
\label{fw}

Let $X=(X_s)_{s\geq 0}$ and ${Y=(Y_s)_{s\geq 0}}$ be two independent
continuous time random walks on $\Z^d$, $d\ge 1$, starting from $0$, with jump rates
$1$ and $\rho\geq 0$ respectively, and which have identical irreducible symmetric jump probability kernels.
We also make the assumption that the increments $X$ and $Y$ on $\Z^d$ have finite second moments. We denote by $\bbP^X$, $\bbP^{Y,\rho}$ 
the associated probability laws.

For $\gb\in\R$ (when $\gb\ge 0$, it should be considered as the inverse temperature),
$t\in\R_{+}$,
and for a fixed realization of $Y$,
we define a Gibbs transformation of the path measure $\bbP^X$:
the polymer path measure $\mu_{t,\gb}^{Y,\p}$. It is absolutely continuous with
respect to $\bbP^X$, and its Radon-Nikodym derivative is given by
\begin{equation}
\label{eq:defmu}
  \frac{\dd \mu_{t,\gb}^{Y,\p}}{\dd \bbP^X} (X) = \frac{e^{\gb L_t(X,Y)}\; \ind_{\{X_t=Y_t\}}}{Z_{t,\gb}^{Y,\p}},
\end{equation}
where $L_t(X,Y):=\int_0^t \ind_{\{X_s=Y_s\}} \dd s$ is the intersection time between $X$ and $Y$,
and 
\begin{equation}
Z_{t,\gb}^{Y,\p}:= \bbE^X\left[e^{\gb L_t(X,Y)}\; \ind_{\{X_t=Y_t\}}\right]
\end{equation}
is the so-called partition function of the system: it is the factor that normalizes $\mu_{t,\gb}^{Y,\p}$ to a probability law.
One can think of $\mu_{t,\gb}^{Y,\p}$ as a measure under which the walk $X$ is given an energy reward $\gb$ for staying in touch with $Y$.
The superscript ``$\p$'' refers to the fact that $X$ is constrained to be {\sl pinned} to $Y$ at its end point $Y_t$. This constrain is taken 
for practical reasons (see below)
and it can be removed without affecting the main features of the model.

Given a trajectory $Y=(Y_s)_{s\ge0}$, we also define the partition function along a time interval $[t_1,t_2]$ as 
\begin{equation}
  Z^{Y,\p}_{[t_1,t_2],\gb}:= Z_{t_2-t_1,\gb}^{\theta_{t_1} Y,\p},
\end{equation}
where $\theta_t Y:= (Y_{s+t}-Y_t)_{s\ge 0}$ ($\theta_t$ is the shift operator along time, it preserves the law of $Y$).

\medskip 
We give now a physical interpretation to this model:
The graph of the random walk $(s,X_s)_{s\in[0,t]}$ models a $1$-dimensional polymer chain living in a $(d+1)$-dimensional space
interacting with a random defect line $(s,Y_s)_{s\in [0,t]}$. The Gibbs measure $\mu_{t,\gb}^{Y,\p}$ is the measure that gives the law of the polymer configuration 
$(s,X_s)_{s\in[0,t]}$ at inverse temperature $\gb$, given a fixed realization of the defect line $Y$.
We are interested in the typical behavior of large systems, that is with large $t$.
At low temperature (large $\gb$), the interaction energy dominates the entropy and the polymer sticks to $Y$, and it is said to be localized.
At high temperature (small $\gb$), the entropy dominates and the polymer wanders away from $Y$, and it is said to be delocalized.
The aim of this paper is to get a better understanding of the phase transition in $\gb$ between the delocalized and localized phase, that is the
behavior at the critical temperature, and of the low temperature behavior of the polymer.

\medskip

As it is shown later in the introduction this is natural to compare
this model with a simpler and exactly solvable model where the random defect line $(s,Y_s)_{s\in [0,t]}$, is replaced by a deterministic one $[0,t]\times \{0\}$.

More physical motivation for the model are given in the introduction of \cite{BS08}.


\begin{rem}[Superadditivity]  \rm
One fundamental property of the \textsl{pinned} partition function,
is the stochastic superadditivity of $\log Z_{t,\gb}^{Y,\p}$. Indeed, for any $0\leq s \leq t$ and $\gb\in\R$,
\begin{equation}
 Z_{t,\gb}^{Y,\p} \geq \bbE^X\left[\ind_{\{X_s=Y_s\}} e^{\gb L_t(X,Y)} \ind_{\{X_t=Y_t\}} \right] = Z_{s,\gb}^{Y,\p} Z_{[s,t],\gb}^{Y,\p}.
\end{equation}
This remark applies also to the partition function along any time interval:
\begin{equation}\label{superaddi}
Z_{[u,w],\gb}^{Y,\p} \geq Z_{[u,v],\gb}^{Y,\p} Z_{[v,w],\gb}^{Y,\p},\ \text{ for any } u\leq v\leq w.
\end{equation}
\label{rem:superad}
\end{rem}
This crucial property allows (with some additional effort) to prove the existence of the Lyapunov exponent of free energy:

\begin{proposition}[from \cite{BS08}  Thm.1.1 and Cor.1.3]

 The limit
\begin{equation}
 \tf(\gb,\rho):=\lim_{t\to \infty} \frac{1}{t}\log Z_{t,\gb}^{Y,\p}
\end{equation}
exists and is non-random $\bbP^{Y,\rho}$ almost surely. We call it the {\sl quenched} free energy. In addition we have
\begin{equation}\label{sup}
 \lim_{t\to \infty} \frac{1}{t}\bbE^{Y,\rho}\left[\log Z_{t,\gb}^{Y,\p}\right]
= \sup_{t>0}\frac{1}{t}\bbE^{Y,\rho}\left[\log Z^{Y,\p}_{t,\gb}\right]= \tf(\gb,\rho).
\end{equation}
Moreover $\gb\mapsto \tf(\gb,\rho)$ is non-decreasing and non-negative so that there exists a value $\gb_c(\rho)$ such that
\begin{center}
 $\tf(\gb,\rho)>0\Leftrightarrow \gb>\gb_c(\rho)$.
\end{center}
The free-energy of the averaged system is called the {\sl annealed} free energy and is defined by
\begin{equation}\begin{split}
 \tf^{\a}(\gb,\rho):=\lim_{t\to\infty} \frac{1}{t}\log \bbE^{Y, \rho} \left[Z_{t,\gb}^{Y,\p}\right],\\
 \gb_c^{\a}(\rho):=\inf\left\{\gb \ | \ \tf^{\a}(\gb,\rho)>0\right\}.
\end{split}\end{equation}
We have by Jensen inequality that $\tf(\gb,\rho)\le  \tf^{\a}(\gb,\rho)$, and $\gb_c(\rho)\ge \gb_c^{\a}(\rho)$.
\end{proposition}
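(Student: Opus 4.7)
The plan is to combine the stochastic superadditivity \eqref{superaddi} with the stationarity and ergodicity of $Y$, and then to derive the annealed statements from the Markov property of $Y$ at intermediate times.

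First, taking expectations in \eqref{superaddi} and using that $\theta_{u}Y$ has the same law as $Y$ under $\bbP^{Y,\rho}$, I get
\begin{equation*}
\bbE^{Y,\rho}\bigl[\log Z_{s+t,\gb}^{Y,\p}\bigr] \;\geq\; \bbE^{Y,\rho}\bigl[\log Z_{s,\gb}^{Y,\p}\bigr] + \bbE^{Y,\rho}\bigl[\log Z_{t,\gb}^{Y,\p}\bigr],
\end{equation*}
so the continuous-time Fekete lemma yields $\tf(\gb,\rho):=\lim_{t}\frac{1}{t}\bbE^{Y,\rho}[\log Z_{t,\gb}^{Y,\p}]=\sup_{t}\frac{1}{t}\bbE^{Y,\rho}[\log Z_{t,\gb}^{Y,\p}]$, which is the deterministic part of \eqref{sup}. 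This requires integrability: the upper bound $\log Z_{t,\gb}^{Y,\p}\leq (\gb\vee 0)\,t$ is immediate from $L_t\leq t$ and $\ind_{\{X_t=Y_t\}}\leq 1$, while a quenched lower bound is obtained by forcing $X$ to copy the jumps of $Y$ on $[0,t]$, which gives $\log Z_{t,\gb}^{Y,\p}\geq \gb\, t - C N_t(Y)$ where $N_t(Y)$ is the number of jumps of $Y$ on $[0,t]$; this has a linear mean, proving integrability.

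Next, since $Y$ has stationary independent increments, the time shift $\theta_{t}$ is ergodic. Applying Kingman's subadditive ergodic theorem to the stationary subadditive family $-\log Z_{[s,t],\gb}^{Y,\p}$ (subadditivity being exactly \eqref{superaddi}) upgrades the convergence to $\bbP^{Y,\rho}$-almost sure, and the ergodic limit must coincide with the mean value $\tf(\gb,\rho)$ from the first step. Monotonicity in $\gb$ is immediate from $L_t(X,Y)\geq 0$. For non-negativity, restrict $X$ to the event $\{X_s\neq Y_s\text{ for }s\in[\varepsilon,t-\varepsilon]\}\cap\{X_t=Y_t\}$: on this event $L_t\leq 2\varepsilon$, and a local-limit-theorem argument (using the finite-variance hypothesis) shows that the probability of this event decays only polynomially in $t$, so its logarithmic rate is $0$. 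Combining with the monotonicity, $\{\gb:\tf(\gb,\rho)=0\}$ is a closed half-line $(-\infty,\gb_c(\rho)]$.

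For the annealed statements, I take expectations in \eqref{superaddi} and use the Markov property of $Y$ at the intermediate time $v$ to factorize:
\begin{equation*}
\bbE^{Y,\rho}\bigl[Z_{[u,w],\gb}^{Y,\p}\bigr]\;\geq\;\bbE^{Y,\rho}\bigl[Z_{[u,v],\gb}^{Y,\p}\bigr]\cdot\bbE^{Y,\rho}\bigl[Z_{[v,w],\gb}^{Y,\p}\bigr].
\end{equation*}
Hence $\log\bbE^{Y,\rho}[Z_{t,\gb}^{Y,\p}]$ is superadditive in $t$, and Fekete again provides the existence of $\tf^{\a}(\gb,\rho)$ as a limit equal to a supremum. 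Jensen's inequality $\bbE^{Y,\rho}[\log\cdot]\leq\log\bbE^{Y,\rho}[\cdot]$ gives $\tf\leq\tf^{\a}$, and since both are non-decreasing and vanish on half-lines this implies $\gb_c(\rho)\geq\gb_c^{\a}(\rho)$.

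The main obstacle is the quenched integrability and the non-negativity $\tf(\gb,\rho)\geq 0$ for $\gb<0$: it requires a deterministic lower bound on $Z_{t,\gb}^{Y,\p}$ valid on $\bbP^{Y,\rho}$-a.e. trajectory that is not worse than sub-exponential. Once that is in place the two Fekete/Kingman arguments are routine.
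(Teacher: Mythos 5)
The paper does not actually prove this proposition; it imports it wholesale from \cite{BS08} (Thm.~1.1 and Cor.~1.3). Measured against the standard argument there, your architecture is the right one: Fekete's lemma for $\bbE^{Y,\rho}[\log Z^{Y,\p}_{t,\gb}]$ via \eqref{superaddi} and stationarity of $\theta_t$, Kingman's subadditive ergodic theorem for the almost sure statement, factorization of $\bbE^{Y,\rho}[Z^{Y,\p}_{[u,v],\gb}Z^{Y,\p}_{[v,w],\gb}]$ by independence of increments for the annealed part, and Jensen for $\tf\le\tf^{\a}$. Two steps, however, are not actually carried out.

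First, the integrability input. ``Forcing $X$ to copy the jumps of $Y$'' is, read literally, an event of $\bbP^X$-probability zero in continuous time (it requires matching the jump times exactly), and for a general jump kernel $q$ the cost of reproducing $Y$'s jump increments is $\sum_{i\le N_t}\log\bigl(1/q(\xi_i)\bigr)$ rather than $CN_t$. The clean statement is $Z^{Y,\p}_{t,\gb}\ge e^{(\gb\wedge 0)t}\,p_t(Y_t)$, so that integrability of $\log Z^{Y,\p}_{t,\gb}$ reduces to $\bbE^{Y,\rho}[\,|\log p_t(Y_t)|\,]<\infty$; this in turn uses the finiteness of the entropy $-\sum_x q(x)\log q(x)$ of a finite-variance kernel, and is exactly the content of the estimate \cite[Lemma 3.1]{BS08} that the present paper invokes repeatedly. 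This is fixable but must be said.

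Second, and more substantially, the non-negativity of $\tf(\gb,\rho)$ for $\gb<0$. Your event $\{X_s\ne Y_s \text{ for } s\in[\gep,t-\gep]\}\cap\{X_t=Y_t\}$ does force $L_t\le 2\gep$, but the assertion that its $\bbP^X$-probability decays only polynomially for $\bbP^{Y,\rho}$-almost every \emph{fixed} $Y$ is not a local-limit-theorem statement: it is a quenched avoidance estimate for $X$ against a frozen random trajectory, which follows from transience of $X-Y$ (plus a zero--one law) when $d\ge 3$ but is genuinely delicate in $d=1,2$, precisely where $\gb_c=0$ and the claim matters. You correctly flag this as the main obstacle, but it is left open, so the proof is incomplete as written. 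A route that bypasses it: by Jensen under the conditional law $\bbP^X(\,\cdot\mid X_t=Y_t)$ one has $\log Z^{Y,\p}_{t,\gb}\ge \log p_t(Y_t)+\gb\,\bbE^X[L_t(X,Y)\mid X_t=Y_t]$, and the conditional expectation equals $\int_0^t p_s(Y_s)\,p_{t-s}(Y_t-Y_s)/p_t(Y_t)\dd s$, which the same local limit theorem estimates show to be $o(t)$ for typical $Y$ in every dimension (of order $\sqrt{t}$ for $d=1$, logarithmic or bounded for $d\ge2$); dividing by $t$ then yields $\tf(\gb,\rho)\ge 0$ directly.
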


\begin{rem}\rm
The critical value $\gb_c(\rho)$ identifies the phase transition between the localized and the delocalized phase.
The fact that $X$ sticks to $Y$ when $\gb>\gb_c(\rho)$ can be seen from the fact that
\begin{equation}
 \frac{\partial }{\partial \gb }\log Z_{t,\gb}^{Y,\p} = \mu_{t,\gb}^{Y,\p} (L_t(X,Y)),
\end{equation}
so that, using convexity and passing to the limit
\begin{equation}
\lim_{t\to \infty}\frac{1}{t}\mu_{t,\gb}^{Y,\p} (L_t(X,Y))=\tf'(\gb,\rho)
\end{equation}
whenever the right-hand side exists. This shows that $L_t(X,Y)$ is asymptotically of order $t$ in the localized phase.
\end{rem}




\subsection{The pure model}

In order to be able to compare the quenched free energy curve with the annealed one, one needs to give some accurate description  about the annealed free energy curve.
As it was remarked in \cite{BS08}, 
the annealed partition function $ \bbE^{Y, \rho} [Z_{t,\gb}^{Y,\p}]$ is simply the partition function of a homogeneous pinning model
$$\bbE^{Y,\rho}\left[Z_{t,\gb}^{Y,\p}\right]=\bbE^{X}\bbE^{Y,\rho}\left[e^{\gb L_t(X-Y,0)}
\ind_{\{(X-Y)_t=0\}}
\right].$$
Under $\bbE^{X}\bbE^{Y,\rho}$, $X-Y$ is a symmetric random walk with jump rate $(1+\rho)$. By rescaling time so that the random walk $X-Y$ has jump rate $1$, one obtains that 
\begin{equation}
 \tf^{\a}(\gb,\rho)= (1+\rho)\tf(\gb/(1+\rho),0).
\label{eq:Fann}
\end{equation}
We write $\tf(\gb)$ for $\tf(\gb,0)$.

The model is in fact {\sl exactly solvable} in the sense that one has an explicit formula for the free energy. This fact was remarked in the celebrated  paper of Fisher \cite{Fisher} for a discrete version of this model. We give a complete description of the pure model in the Appendix.
\smallskip

Let $p_t (\cdot) := \bbP^X(X_t=\cdot)$ denote the transition
probability kernel of $X$ at time $t$, and set $ G:=\int_0^{\infty}p_{t}(0)\dd t$ ($G<\infty$ when $d\geq 3$).

\begin{proposition}
\label{annealed}
For $d\ge 1$, the {\sl annealed} critical point is $\gb_c(0)=G^{-1}$ (we use the convention that $G^{-1}=0$ if $G=\infty$, for $d=1,2$),
and in view of \eqref{eq:Fann}, $\gb_c^{\a}(\rho)=(1+\rho)/G$.
One has also the critical behavior of the {\sl annealed} free energy:
\begin{itemize}
 \item for $d=1,3$,
\begin{equation}
\tf(\gb) \stackrel{\gb \downarrow \gb_c^{\a}}{\sim } c_0 (\gb-\gb_c^{\a})^{2}.
\end{equation}

\item for $d=2$,
\begin{equation}
\tf(\gb) \stackrel{\gb \downarrow \gb_c^{\a}}{=} \exp\left( -c_0\, \frac{1+o(1)}{\gb}  \right).
\end{equation}

\item for $d=4$,
\begin{equation}
\tf(\gb) \stackrel{\gb \downarrow \gb_c^{\a}}{\sim } c_0 (\gb-\gb_c^{\a})/\log (\gb-\gb_c^{\a}).
\end{equation}

\item for $d\geq 5$
\begin{equation}
\tf(\gb)  \stackrel{\gb \downarrow \gb_c^{\a}}{\sim } c_0 (\gb-\gb_c^{\a}).
\end{equation}
\end{itemize}
($c_0$ is a constant that can be made explicit, and that depends on $G$,
the dimension and the second moment of the jump kernel).\\
In any dimension, we also have
\begin{equation}\label{tralala}
 \lim_{\gb\to\infty}\tf(\gb)-\gb+1=0.
\end{equation}
\end{proposition}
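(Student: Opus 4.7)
The key observation is that the pure-model partition function
\[
Z_{t,\gb}^{0,\p} = \bbE^X\bigl[e^{\gb L_t(X,0)}\ind_{\{X_t=0\}}\bigr]
\]
is exactly solvable via a renewal decomposition of $X$ at $0$. My plan is to write the trajectory as an i.i.d.\ alternating sequence of $\mathrm{Exp}(1)$ holding times $(e_i)_{i\ge 1}$ at $0$ and excursion durations $(\sigma_i)_{i\ge 1}$ (with $\sigma_i=+\infty$ allowed when $X$ is transient), so that on the event $\{T_k\le t<T_k+e_k\}$, where $T_k:=\sum_{i<k}(e_i+\sigma_i)$ is the start of the $k$-th visit to $0$, one has $X_t=0$ and $L_t(X,0)=\sum_{i<k}e_i+(t-T_k)$.

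Taking the Laplace transform $\hat Z(\gb,s):=\int_0^\infty e^{-st}Z_{t,\gb}^{0,\p}\,\dd t$ decouples the sum over $k$ into a geometric series of independent factors, and a direct computation yields
\[
\hat Z(\gb,s)=\frac{1}{(1+s-\gb)-\bbE[e^{-s\sigma_1}]}.
\]
Specialising to $\gb=0$ identifies $\hat Z(0,s)$ with the Laplace transform of the Green's kernel $G_s:=\int_0^\infty e^{-st}p_t(0)\,\dd t$, which gives the identity $\bbE[e^{-s\sigma_1}]=1+s-1/G_s$ and, after substitution,
\[
\hat Z(\gb,s)=\frac{G_s}{1-\gb\,G_s}.
\]
Since $Z_{t,\gb}^{0,\p}$ grows like $e^{\tf(\gb)t(1+o(1))}$, the free energy is the abscissa of convergence of $\hat Z(\gb,\cdot)$, and the strict monotonicity of $s\mapsto G_s$ on $(0,\infty)$ yields the fundamental implicit equation $\gb\,G_{\tf(\gb)}=1$. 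Setting $\tf(\gb)=0$ in this relation gives $\gb_c(0)=1/G$ (with $\gb_c(0)=0$ when $G=\infty$), and combining with \eqref{eq:Fann} produces $\gb_c^{\a}(\rho)=(1+\rho)/G$.

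The critical-exponent statements then reduce to small-$s$ asymptotics of $G_s$, obtained from the local central limit theorem $p_t(0)\sim c_d t^{-d/2}$ by a standard Abelian/Tauberian analysis of $\int_0^\infty e^{-st}t^{-d/2}\,\dd t$. One obtains $G_s\sim c'_1/\sqrt s$ if $d=1$, $G_s\sim c'_2\log(1/s)$ if $d=2$, $G-G_s\sim c'_3\sqrt s$ if $d=3$, $G-G_s\sim c'_4\, s\log(1/s)$ if $d=4$, and $G-G_s\sim c'_d\, s$ if $d\ge 5$, with constants explicit in terms of the dimension and the second moment of the jump kernel. Inverting $G_{\tf(\gb)}=1/\gb$ and, in the localised regime, using $1/\gb-1/\gb_c^{\a}=-(\gb-\gb_c^{\a})/(\gb\gb_c^{\a})$, produces each of the announced asymptotics for $\tf(\gb)$. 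The delicate step is the marginal pair $d=2,4$: one must verify that the $o(1)$ correction in the local CLT does not contaminate the logarithmic factor, which I would handle by splitting the integral at an intermediate scale $t\asymp 1/s$ and using a uniform bound on $t^{d/2}p_t(0)$.

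Finally, for the low-temperature asymptotic \eqref{tralala}, I would expand $G_s$ for large $s$. Conditioning on the number of jumps of $X$ gives $p_t(0)=e^{-t}\sum_{n\ge 0}(t^n/n!)q_n$ with $q_n$ the return probability of the embedded discrete walk (so $q_0=1$, $q_1=0$), whence
\[
G_s=\sum_{n\ge 0}\frac{q_n}{(s+1)^{n+1}}=\frac{1}{s+1}+O(s^{-3}),\qquad s\to\infty.
\]
Plugging into $\gb\,G_{\tf(\gb)}=1$ gives $\gb=(\tf(\gb)+1)\bigl(1+O(\tf(\gb)^{-2})\bigr)$, so that $\tf(\gb)-\gb+1=O(1/\gb)\to 0$, which is \eqref{tralala}.
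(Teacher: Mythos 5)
Your proposal is correct and follows essentially the same route as the paper: a renewal decomposition of $X$ at the origin whose Laplace transform yields the implicit equation $\gb\,G_{\tf(\gb)}=1$ (this is exactly the paper's characterization \eqref{youpla}, obtained there via a tilted renewal measure and the same convolution/Laplace computation), followed by the local central limit theorem and an Abelian theorem for the small-$s$ behavior of $G_s$ to extract the critical asymptotics. The only divergence is at \eqref{tralala}, which the paper gets from direct path estimates (Lemma \ref{cocorico}, constraining $X$ not to jump and bounding $K\le 1$) rather than from your large-$s$ expansion $G_s=(s+1)^{-1}+O(s^{-3})$; both arguments are valid, yours implicitly using the standard convention that the jump kernel puts no mass at $0$, which the paper also assumes.
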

Part of the above result (namely, the value of $\gb_c$), was proved in \cite{BS08}. We have included here also the asymptotic behavior near $\gb_c$ in order to know the specific heat exponent in any dimension. The knowledge of the annealed specific heat exponent (the free energy behaves like $(\gb-\gb_c)^{2-\alpha}$ when $\gb\to \gb_c^+$ where $\alpha$ is the specific heat exponent) allows to make prediction concerning disorder relevance.


\subsection{Harris criterion and disorder relevance}

 The physicist A.B. Harris gave a general criterion for disordered systems to predict disorder relevance
(for arbitrarily small strength of disorder) on a heuristic level. The criterion is based on the specific heat exponent of the pure system:
if the specific heat exponent is negative then disorder should be {\sl irrelevant}, 
if it is positive, disorder should be {\sl relevant}, and this corresponds to $d\ge 4$ for our model. The Harris criterion
gives no prediction for the marginal case when the specific heat exponent vanishes (and in that case, it is believed that disorder relevance depends on the model which is considered).

For the Random Walk Pinning Model, various pieces of work have brought this prediction on rigorous grounds \cite{BT, BS08, BS09}.
One of the main questions is to determine whether the annealed and quenched critical points differ or not.
If $\gb_c(\rho)=\gb_c^{\a}(\rho)$, then the disorder is said to be {\sl irrelevant}, and the quenched model's
critical behavior is believed to be similar to the one of the annealed model. Otherwise, the disorder shifts
the critical point ($\gb_c(\rho)>\gb_c^{\a}(\rho)$), and is said to be {\sl relevant}. The question of the relevance or
irrelevance of disorder for the RWPM is now solved, also for the marginal case $d=3$, both for the continuous time model 
\cite{BS08, BS09} and
for the discrete time model \cite{BT,BS08}.

\begin{theorem}[\cite{BS08,BS09}, Continuous time RWPM]
\label{shiftRWPM}
In dimension $d=1$ and $d=2$, one has ${\gb}_c(\rho)={\gb}_c^{\a}(\rho)=0$ for any positive $\rho$.
In dimension $d\geq 3$, one has $\gb_c>{\gb}_c^{\a}>0$ for each
$\rho>0$.
Moreover, we have a bound on the shift of the critical point :
\begin{itemize}
 \item For $d\geq 5$, there exists
  $a>0$ such that ${\gb}_c-{\gb}_c^{\a} \geq a \rho$ for
  all $\rho\in[0,1]$.
 \item For $d=4$ and for each $\gd>0$, there exists
  $a_{\gd}>0$ such that ${\gb}_c-{\gb}_c^{\a}\geq a_{\gd}
  \rho^{1+\gd}$ for all $\rho\in[0,1]$.
 \item For $d=3$ and for any $\gz>2$, there exists $c(\gz)>0$ such that
${\gb}_c-{\gb}_c^{\a}\geq e^{-c(\gz)\rho^{-\gz}}$ for all $\rho\in(0,1]$.
\end{itemize}
\end{theorem}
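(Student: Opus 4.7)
The plan splits by dimension. The easy direction $\gb_c(\rho)\geq\gb_c^{\a}(\rho)$ always follows from Jensen's inequality, so the content of the theorem lies in the reverse inequality.

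In the low-dimensional case $d=1,2$, one has $G=\infty$ and hence $\gb_c^{\a}(\rho)=0$; it remains to prove $\tf(\gb,\rho)>0$ for every $\gb>0$. The natural tool is a second moment method. Writing $\bbE^{Y,\rho}[(Z_{t,\gb}^{Y,\p})^2]$ as an expectation over two independent copies $X^{(1)},X^{(2)}$ of the walk and one $Y$, then integrating $Y$ out, one reduces the ratio $\bbE^{Y,\rho}[(Z_{t,\gb}^{Y,\p})^2]/\bbE^{Y,\rho}[Z_{t,\gb}^{Y,\p}]^2$ to an exponential functional of the pairwise intersection times of three coupled walks. In $d=1,2$, standard heat-kernel estimates on $p_t(0)$ allow one to bound this ratio by $e^{o(t)}$, and a Paley--Zygmund argument combined with the a.s.\ existence of the Lyapunov exponent transfers the strict positivity of $\tf^{\a}$ (which holds for all $\gb>0$ by Proposition \ref{annealed}) to $\tf$, yielding $\gb_c(\rho)\leq 0$.

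For $d\geq 3$ I would use the fractional-moment + coarse-graining + change-of-measure machinery by now standard in disordered pinning. Fix $\theta\in(0,1)$ and a block length $T=T(\rho)$; partition $[0,t]$ into blocks $I_k=[kT,(k+1)T]$, insert contact indicators at block endpoints using the pinning boundary condition, and decompose $Z_{t,\gb}^{Y,\p}$ as a sum over subsets $J\subseteq\{0,\dots,N\}$ indexing the ``contact'' blocks, each summand being a product of block partition functions. Applying the subadditivity $(\sum a_i)^\theta\leq\sum a_i^\theta$ and then averaging against a tilted law $\widetilde{\bbP}^{Y,\rho}$ that perturbs $Y$ on each contact block (paying a fixed Radon--Nikodym cost $C_\theta$ there), one bounds $\bbE^{Y,\rho}[(Z_{t,\gb}^{Y,\p})^\theta]$ by a renewal sum. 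Its summability in $t$ forces $\tf(\gb,\rho)=0$ on a strip $\gb\in[\gb_c^{\a},\gb_c^{\a}+\gD(\rho)]$, whence $\gb_c(\rho)\geq\gb_c^{\a}(\rho)+\gD(\rho)>\gb_c^{\a}(\rho)$. The width $\gD(\rho)$ is then obtained by optimising $T$ and the tilt: balancing the annealed near-critical behaviour from Proposition \ref{annealed} against the variance budget of the tilt gives $\gD(\rho)\asymp\rho$ in $d\geq 5$ (Gaussian single-block tilt of $O(1)$ variance), $\gD(\rho)\asymp\rho^{1+\gd}$ in $d=4$ (same scheme, logarithmic corrections), and the exponential lower bound $\gD(\rho)\geq\exp(-c(\gz)\rho^{-\gz})$ for each $\gz>2$ in $d=3$.

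The main obstacle is the marginal case $d=3$: the variance of $\log Z_{T,\gb}^{Y,\p}$ under $\bbE^{Y,\rho}$ is too small relative to its mean for a single-block tilt to beat the annealed bound, so one is forced to cascade changes of measure across many scales and to match their cumulative cost to the logarithmic tail of three-dimensional random walk returns. This is precisely why $d=3$ was only resolved in the follow-up \cite{BS09}.
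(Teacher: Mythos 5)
This theorem is not proved in the paper you were given: it is imported verbatim from \cite{BS08,BS09}, so there is no internal proof to compare your attempt against. Judged against what is actually done in those references, your outline for $d\ge 3$ has the right architecture: fractional moments $\bbE^{Y,\rho}[(Z^{Y,\p}_{t,\gb})^{\theta}]$, a coarse-graining of $[0,t]$ into blocks whose length is tied to the annealed correlation length, a change of measure tilting the jump rate of $Y$ on the ``visited'' blocks at bounded entropy cost per block, and a multi-scale cascade of tilts in the marginal case $d=3$ (which is why the $d=3$ gap is only $e^{-c(\gz)\rho^{-\gz}}$). But as written it is a roadmap, not a proof: the two steps carrying all the content --- that the resulting renewal sum is summable precisely on a strip of width $\asymp\rho$, $\rho^{1+\gd}$, resp.\ $e^{-c(\gz)\rho^{-\gz}}$ above $\gb_c^{\a}(\rho)$, and that the Radon--Nikodym cost of the tilt stays $O(1)$ per block while its gain beats the annealed block partition function --- are asserted rather than carried out, and that is exactly where the dimension-dependence of the three bounds lives.

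The $d=1,2$ part contains a genuine gap. The claim that $\bbE^{Y,\rho}[(Z^{Y,\p}_{t,\gb})^2]/\bbE^{Y,\rho}[Z^{Y,\p}_{t,\gb}]^2=e^{o(t)}$ for fixed $\gb>0$ is not something ``standard heat-kernel estimates'' deliver: integrating out $Y$ produces an effective attraction between the two replicas $X^{(1)}-Y$ and $X^{(2)}-Y$, which are recurrent in $d\le 2$, and one expects this ratio to grow \emph{exponentially} in $t$ for every fixed $\gb>0$ (this is the usual failure of the plain second-moment method at and below the critical dimension). What does work is either (i) the elementary Jensen plus local CLT bound that the present paper itself carries out for $d=1$, giving $\tf(\gb,\rho)\ge c\gb^{2}/\log(1/\gb)>0$ directly and hence $\gb_c(\rho)=0$; or (ii) for $d=2$, a moment estimate at a single finite scale $t(\gb)$ comparable to the annealed correlation length $\exp(c/\gb)$, transferred to the free energy via the superadditivity identity \eqref{sup}. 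Note that in $d=2$ plain Jensen at scale $t$ yields $\gb\,\bbE[L_t\mid\text{pinned}]\asymp\gb\log t$ against an entropic cost $\asymp\log t$, which is positive only for $\gb$ bounded away from $0$; so the choice of scale is essential and cannot be elided as you do.
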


Let us also mention that the picture of disorder relevance/irrelevance
for the renewal pinning model (see \cite{Book} for a complete introduction to this model) is mostly complete, thanks to a series
of recent articles \cite{A06, DGLT,GLT}. It has been
showed that the Harris criterion is verified, and that in the marginal case, disorder is relevant.

Another issue that has been given much attention is the so called {\sl smoothing} of the free energy curve.
It is believed that for many systems, the presence of disorder makes the free energy curve more regular: the phase transition
is at least of second order (there is no discontinuity in the derivative). In particular this means that if
the annealed specific heat exponent is negative, quenched and annealed exponent have to differ.
This underlines disorder relevance, and gives further justification for the Harris criterion.

Smoothing type results have been shown by Aizenman and Wehr for disordered Ising model \cite{AW}, and more recently by Giacomin and Toninelli 
for the random pinning model based on renewal process \cite{GT05}  
(and also for a hierarchical version of the same model \cite{LT}). 


We also mention that there exist some peculiar pinning models for which
there is no smoothing phenomenon and the quenched and annealed systems
have always the same behavior, even if the critical points are different (see e.g. \cite{A06bis}).




 \subsection{Smoothing of the phase transition}

The first result we present for the disordered model is the smoothing of the free-energy curve around the phase transition.
This phenomenon occurs in dimension $d\geq3$. For $d=1,2$, the model is a bit different because of recurrence 
of the random walk in these dimensions, see later.

\begin{theorem}\label{smooth}
 For all $d\ge 3$, $\rho>0$, $\gb>0$, we have
\begin{equation}
 \tf(\gb,\rho)\le \frac{3d G^2}{\rho}(\gb-\gb_c(\rho))^2_+.
\end{equation}
\end{theorem}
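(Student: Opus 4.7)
The plan is to use the change-of-measure / relative entropy inequality, in the spirit of the smoothing argument of Giacomin--Toninelli for disordered pinning. The starting point is the variational formula: for any probability law $\bbQ$ on the trajectories of $Y$,
\begin{equation*}
\bbE^{Y,\rho}\bigl[\log Z_{t,\gb}^{Y,\p}\bigr] \;\le\; H(\bbP^{Y,\rho}|\bbQ) + \log \bbE^{\bbQ}\bigl[Z_{t,\gb}^{Y,\p}\bigr].
\end{equation*}
Dividing by $t$, letting $t\to\infty$ and using \eqref{sup}, this produces an upper bound on $\tf(\gb,\rho)$ for each admissible $\bbQ$, and the task reduces to choosing $\bbQ$ well.

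The simplest choice is the rate-tilted law $\bbQ=\bbP^{Y,\rho'}$ with $\rho'>\rho$: the relative entropy rate is the explicit Poisson divergence $\rho\log(\rho/\rho')+\rho'-\rho$, while under $\bbQ$ the increment process $X-Y$ is a continuous-time random walk of rate $1+\rho'$ with the original kernel, so $\tfrac{1}{t}\log \bbE^{\bbQ}[Z_{t,\gb}^{Y,\p}]$ converges to the pure-model quantity $(1+\rho')\tf(\gb/(1+\rho'))$, which vanishes whenever $\rho'\ge \gb G - 1$. Choosing $\rho'=\gb G-1$ and Taylor-expanding the divergence around $\rho'=\rho$ then yields, after a short computation, a first upper bound of order $\tfrac{G^2}{2\rho}(\gb-\gb_c^{\a}(\rho))^2_+$.

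The main obstacle is that this simple bound involves the \emph{annealed} critical point $\gb_c^{\a}(\rho)$, whereas the theorem concerns the \emph{quenched} one $\gb_c(\rho)$, which is strictly larger for $d\ge 3$ by Theorem \ref{shiftRWPM}. In particular, the naive bound is non-zero on the entire interval $(\gb_c^{\a}(\rho),\gb_c(\rho)]$ where $\tf$ is identically zero, and it fails to vanish as $\gb\downarrow\gb_c(\rho)$. Closing this gap is the heart of the theorem, since any annealed-type partition function under a measure $\bbQ$ of product form over time carries its own ``annealed'' critical point.

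To remedy this, I would refine the change of measure: instead of a uniform global tilt of the jump rate, act on $Y$ only on blocks of length $\ell$ (to be optimized depending on $\gb-\gb_c(\rho)$), with a local modification of the kernel that captures the rare favorable stretches responsible for the critical point shift, and then reassemble the contributions of successive blocks via the superadditivity \eqref{superaddi}. The factor $3d$ in the stated constant strongly suggests that the local tilt is carried out independently along each of the $d$ coordinate directions, contributing an overall combinatorial factor. The technical crux is to calibrate the block length $\ell$ and the local tilt amplitude so that the cumulative relative entropy remains quadratic with coefficient $\propto 1/\rho$, while the annealed contribution under the refined $\bbQ$ vanishes precisely at $\gb=\gb_c(\rho)$ rather than at $\gb=\gb_c^{\a}(\rho)$.
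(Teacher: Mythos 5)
Your first step (global tilt $\bbQ=\bbP^{Y,\rho'}$ plus the entropy inequality) is fine as far as it goes, and you correctly diagnose that it can only ever produce $(\gb-\gb_c^{\a}(\rho))_+^2$: for any product-form $\bbQ$, $\log\bbE^{\bbQ}[Z^{Y,\p}_{t,\gb}]$ is an annealed object. But your proposed remedy — a block-local modification of the kernel calibrated so that the annealed contribution under $\bbQ$ vanishes at $\gb_c(\rho)$ — is not carried out, and I do not see how it could be: no change of measure on $Y$ alone will make an \emph{annealed} partition function detect the \emph{quenched} critical point. The missing idea is to run the comparison in the opposite direction, between two \emph{quenched} free energies. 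The paper fixes $\rho':=\rho+G(\gb-\gb_c(\rho))$ and invokes the monotonicity of $\rho\mapsto\gb_c(\rho)/(1+\rho)$ from \cite{BS09} (Proposition \ref{monont}(iii)) to conclude that $\gb\le\gb_c(\rho')$, hence $\tf(\gb,\rho')=0$. This external input is what injects the quenched critical point into the estimate, and nothing in your sketch replaces it.

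With that in hand, the argument is a rare-stretch lower bound rather than an entropy upper bound: under $\bbP^{Y,\rho'}$, a block of length $L$ on which $Y$ behaves like a rate-$\rho$ walk, so that $\log Z^{Y,\p}_{B_i,\gb}\ge L(1-\gep)\tf(\gb,\rho)$, occurs with probability $q\ge \tfrac14 e^{-L(\rho'-\rho)^2/\rho'}$ (a Cauchy--Schwarz change-of-measure estimate, Lemma \ref{boundp}). Taking $T=L\lceil q^{-1}\rceil$ blocks, at least one is good with probability bounded away from $0$, and superadditivity \eqref{superaddi} plus the local CLT bound $\log p_s(Y_s)\ge -(1+\gep)\tfrac{d}{2}\log s$ for the two complementary pieces gives
\begin{equation*}
0=T\tf(\gb,\rho')\ \ge\ \bbE^{Y,\rho'}\bigl[\log Z^{Y,\p}_{T,\gb}\bigr]\ \ge\ (1-e^{-1})L(1-\gep)\tf(\gb,\rho)-(1+\gep)\tfrac{3d}{2}\log T,
\end{equation*}
whence $\tf(\gb,\rho)\lesssim d\,\log(Lq^{-1})/L\le (1+\gep)\,3dG^2(\gb-\gb_c(\rho))^2/\rho$. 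Note in particular that the factor $3d$ has nothing to do with tilting along the $d$ coordinate directions, as you speculate; it is $3\times\tfrac{d}{2}\times 2$, coming from the three boundary terms $\log p_\cdot(\cdot)$ that superadditivity leaves over. As it stands, your proposal identifies the obstacle but does not supply the two ingredients that overcome it, so the proof is incomplete.
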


This shows that if $d\geq 4$, the disorder makes the phase transition at least of second order,
whereas it is of first order for the {\sl annealed} model (see Proposition~\ref{annealed}).
The methods that has been used to prove the previous smoothing results \cite{AW,GT05} have been a strong source of inspiration 
for our proof, but, as the nature of the disorder is very different here, some new ideas are necessary.
A crucial point is to use an estimate on how $\tf(\gb,\rho)$ varies with $\rho$, which is present in \cite{BS09}.
It has been proved for the renewal pinning model that the critical exponent for the free-energy 
is related to the asymptotics of the number of contacts at the critical point \cite[Prop.\ 1.3]{L}. For the random walk pinning model
an analogous relation holds (where the number of contact is replaced by $L_t(X,Y)$) and gives the following result.
We include also its proof, which is very similar to what is done in \cite{L}, for the sake of completeness.

\begin{cor}
Let us fix $\rho>0$, $d\ge 3$ and $\gep>0$. Then, under $\bbP^{Y,\rho}$,
\begin{equation}
\lim_{t\to\infty}\mu^{Y,\p}_{t,\gb_c(\rho)}\left(L_t(X,Y)\ge t^{1/2+\gep}\right)= 0,
\label{cor:contact}
\end{equation}
in probability.
\end{cor}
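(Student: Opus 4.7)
The plan is to adapt the strategy of \cite{L}, combining convexity of the log-partition function with the smoothing bound of Theorem \ref{smooth}. Since $\partial_\gb\log Z^{Y,\p}_{t,\gb}=\mu^{Y,\p}_{t,\gb}(L_t(X,Y))$, convexity of $\gb\mapsto\log Z^{Y,\p}_{t,\gb}$ implies that $\mu^{Y,\p}_{t,\gb}(L_t)$ is non-decreasing in $\gb$. Hence for any $h>0$,
\begin{equation*}
h\,\mu^{Y,\p}_{t,\gb_c}(L_t)\;\leq\;\int_{\gb_c}^{\gb_c+h}\mu^{Y,\p}_{t,u}(L_t)\,\dd u\;=\;\log Z^{Y,\p}_{t,\gb_c+h}-\log Z^{Y,\p}_{t,\gb_c}.
\end{equation*}
Averaging in $Y$, using \eqref{sup} and then Theorem \ref{smooth}, one gets
\begin{equation*}
\bbE^{Y,\rho}\bigl[\log Z^{Y,\p}_{t,\gb_c+h}\bigr]\;\leq\;t\,\tf(\gb_c+h,\rho)\;\leq\;\frac{3dG^2}{\rho}\,t\,h^2.
\end{equation*}

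The non-trivial step is to bound $\bbE^{Y,\rho}[\log Z^{Y,\p}_{t,\gb_c}]$ from below. Dropping $e^{\gb_c L_t}\geq 1$ in the partition function yields $Z^{Y,\p}_{t,\gb_c}\geq p_t(Y_t)$, so it suffices to show $\bbE^{Y,\rho}[\log p_t(Y_t)]\geq -K\log t$. I would split on $\{|Y_t|\leq t^{3/4}\}$: a local CLT bound gives $\log p_t(Y_t)\geq -\tfrac{d}{2}\log t-C|Y_t|^2/t-C'$ on this set, and $\bbE^{Y,\rho}|Y_t|^2=O(\rho t)$ then controls the averaged contribution by $-O(\log t)$. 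On the complementary event, $\bbP^{Y,\rho}(|Y_t|>t^{3/4})$ decays faster than any inverse power of $t$ (finite second moments and standard concentration for continuous-time random walks), so even a crude polynomial-in-$t$ lower bound on $\log p_t$ via the Poisson decomposition of the jumps contributes only $O(1)$.

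Putting the two estimates together gives $\bbE^{Y,\rho}[\mu^{Y,\p}_{t,\gb_c}(L_t)]\leq C_1 t h+K\log t/h$, and optimising $h=\sqrt{K\log t/(C_1 t)}$ produces
\begin{equation*}
\bbE^{Y,\rho}\bigl[\mu^{Y,\p}_{t,\gb_c}(L_t)\bigr]\;\leq\;C\sqrt{t\log t}.
\end{equation*}
A Markov inequality under $\mu^{Y,\p}_{t,\gb_c}$, then expectation in $Y$, yields
\begin{equation*}
\bbE^{Y,\rho}\Bigl[\mu^{Y,\p}_{t,\gb_c}\bigl(L_t\geq t^{1/2+\gep}\bigr)\Bigr]\;\leq\;\frac{\bbE^{Y,\rho}[\mu^{Y,\p}_{t,\gb_c}(L_t)]}{t^{1/2+\gep}}\;\leq\;C\,\frac{\sqrt{\log t}}{t^\gep}\;\longrightarrow\;0,
\end{equation*}
and one more Markov step in $Y$ (since $\mu^{Y,\p}_{t,\gb_c}(L_t\geq t^{1/2+\gep})\in[0,1]$) converts $L^1$ convergence into convergence in $\bbP^{Y,\rho}$-probability, as required.

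The main obstacle is the logarithmic lower bound $\bbE^{Y,\rho}[\log Z^{Y,\p}_{t,\gb_c}]\geq -K\log t$: the super-additivity of $\bbE^{Y,\rho}[\log Z^{Y,\p}_{t,\gb}]$ together with $\tf(\gb_c,\rho)=0$ only yields $\bbE^{Y,\rho}[\log Z^{Y,\p}_{t,\gb_c}]=o(t)$, which is not strong enough to beat the $C t h^2$ term after dividing by $h$. This sharper estimate is the only place where one must use the concrete structure of the continuous-time random walk $Y$; the rest of the argument is a soft consequence of convexity and the smoothing inequality.
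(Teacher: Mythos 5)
Your argument is correct, and it rests on the same two pillars as the paper's proof: the smoothing inequality of Theorem \ref{smooth} applied at $\gb_c+h$ with $h\asymp t^{-1/2}$, and the lower bound $Z^{Y,\p}_{t,\gb_c}\ge p_t(Y_t)$ with $\bbE^{Y,\rho}[\log p_t(Y_t)]\ge -K\log t$. The implementation differs, though. The paper argues by contradiction: assuming the event $\{\mu^{Y,\p}_{t,\gb_c}(L_t\ge t^{1/2+\gep})\ge c\}$ has probability at least $c$ along a subsequence, it lower-bounds the exponential moment $\mu^{Y,\p}_{t,\gb_c}(e^{uL_t})\ge c\,e^{ut^{1/2+\gep}}$ on that event, deduces $\tf(\gb_c+u,\rho)\ge u^{2-\gep}$, and contradicts Theorem \ref{smooth}. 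You instead run the argument forwards: convexity of $\gb\mapsto\log Z^{Y,\p}_{t,\gb}$ gives $h\,\mu^{Y,\p}_{t,\gb_c}(L_t)\le \log Z^{Y,\p}_{t,\gb_c+h}-\log Z^{Y,\p}_{t,\gb_c}$, whence the quantitative first-moment bound $\bbE^{Y,\rho}[\mu^{Y,\p}_{t,\gb_c}(L_t)]\le C\sqrt{t\log t}$, and Markov finishes. Your version is arguably cleaner and yields more: it shows $L_t/(a_t\sqrt{t\log t})\to 0$ for any $a_t\to\infty$, not just the $t^{1/2+\gep}$ threshold, and it avoids the contradiction scaffolding. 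What it loses relative to the exponential-moment route is nothing here, since both are limited by the same $th^2$ versus $\log t/h$ trade-off.

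One caveat: do not re-derive the estimate $\bbE^{Y,\rho}[\log p_t(Y_t)]\ge -K\log t$ from scratch. Your sketch via the local CLT on $\{|Y_t|\le t^{3/4}\}$ is the weak link: the local CLT gives $p_t(x)=\bar p_t(x)+o(t^{-d/2})$ uniformly, which does \emph{not} by itself furnish a Gaussian \emph{lower} bound $p_t(x)\ge c\,t^{-d/2}e^{-C|x|^2/t}$ in the moderate-deviation regime $\sqrt{t}\ll|x|\le t^{3/4}$, where the Gaussian main term is itself $o(t^{-d/2})$. The estimate you need is exactly \cite[Lemma 3.1]{BS08}, which is what the paper invokes at the corresponding step; cite it and the proof is complete.
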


\begin{rem}\rm
This result contrasts with what happens for the pure model ($\rho=0$),
where typically $L_t(X,0)\asymp t$ at $\gb_c$ for $d\ge 5$ (as shown in Corollary \ref{cor:purecontact}).
In analogy with what happens for the discrete renewal pinning model see \cite{Book},
one believes that at the critical temperature, $L_t(X,0)\ge t^{1-\gep}$ with high probability for any $\gep>0$ in dimension $d=4$.
This underlines a change in the critical behavior also in this dimension.
\end{rem}

\begin{proof}
 Suppose there exists some $c>0$ such that one can find an arbitrarily large value of $t$ for which
\begin{equation}\label{hy}
 \bbP^Y\left\{\mu^{Y,\p}_{t,\gb_c(\rho)}\left(L_t(X,Y)\ge t^{1/2+\gep}\right)\ge c\right\}\ge c.
\end{equation}
Then we define $t_0$ large enough such that the above holds, $u:=t_0^{-1/2}$ and $\gb_c:=\gb_c(\rho)$.
One has
\begin{equation}
 Z_{t_0,\gb_c+u}^{Y,\p} = \bbE^X \left[ e^{(\gb_c+u) L_{t_0}(X,Y)} \ind_{\{X_{t_0}=Y_{t_0}\}}\right] = 
        Z_{t_0,\gb_c}^{Y,\p}\ \mu_{t_0,\gb_c}^{Y,\p}\left(e^{u L_{t_0}(X,Y)}\right),
\end{equation}
so that
\begin{eqnarray}
\lefteqn{\bbE^{Y,\rho} \left[ \log Z_{t_0,\gb_c+u}^{Y,\p}\right]= \bbE^{Y,\rho} \left[ \log Z_{t_0,\gb_c}^{Y,\p}
+ \log \mu_{t_0,\gb_c}^{Y,\p}\Big(e^{u L_{t_0}(X,Y)}\Big)\right] } \nonumber\\
 & \ge &  \bbE^{Y,\rho}\left[\log p_{t_0}(Y_{t_0})\right]+ 
\bbE^{Y,\rho}\left[\log \left(ce^{u t_0^{1/2+\gep}}\right)
\ind_{\big\{\mu^{Y,\p}_{t_0,\gb_c}\left(L_{t_0}(X,Y)\ge {t_0}^{1/2+\gep}\right)\ge c\big\}}\right] \nonumber\\
  & \ge &
-d\log t_0+ c(t_0^{\gep}+\log c)\ge t_0^{\gep/2},
\end{eqnarray}
where in the first inequality we used that $Z_{t_0,\gb_c}^{Y,\p}\geq p_{t_0}(Y_{t_0})$ (recall the notation introduced just before
Proposition~\ref{annealed}, this is just using the fact that $\gb_c\ge 0$).
The second inequality uses an estimate from \cite[Lemma 3.1]{BS08} which is valid if $t_0$ is large enough for the first term,
and \eqref{hy} for the second term. The last inequality is valid
if $t_0$ is large enough.
This implies, by \eqref{superaddi}
\begin{equation}
\tf\big(\gb_c(\rho)+u,\rho\big)\ge \frac{1}{t_0}\bbE^{Y,\rho} \left[\log Z_{t_0,\gb_c+u}^{Y,\p}\right]\ge t_0^{\gep/2-1}\ge u^{2-\gep}.
\end{equation}
This contradicts Theorem \ref{smooth}, therefore \eqref{hy} cannot hold.
\end{proof}

\smallskip

In dimension $1$ or $2$, the situation is a bit different due to recurrence of the random walk.
In dimension $d=2$, the coincidence of quenched and annealed critical point,
and the fact that the phase transition is of infinite order of the annealed system implies
that the phase transition is also of infinite order (i.e.\ smoother than any power of $(\gb-\gb_c)$\ ) for the quenched system.
In dimension $d=1$, one also shows that disorder does not change the nature of the phase transition  (or at least not in a significant way).

\begin{proposition}[Quenched free-energy at high temperature for $d=1$]
There exist a constant $c>0$ such that for any $\rho$ there exists $\gb_0$ such that
\begin{equation}
 \tf(\gb,\rho)\geq \frac{c}{1+\rho} \gb^{2} \log (1/\gb) ^{-1}, \quad \forall \gb \in [0,\gb_0].
\end{equation}
Thus, $\tf$ and $\tf^{\a}$ have the same critical exponent.
\end{proposition}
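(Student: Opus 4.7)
The plan is to apply a block argument at a sufficiently large scale $T = T(\gb)$, combined with a second-moment estimate for the partition function $Z_{T,\gb}^{Y,\p}$, to pass from the annealed free energy to the quenched one. By the superadditivity \eqref{sup} of the quenched free energy, for every $T > 0$ we have $\tf(\gb,\rho) \geq \tfrac{1}{T}\bbE^{Y,\rho}[\log Z_{T,\gb}^{Y,\p}]$. I would choose $T = T(\gb) := A \gb^{-2}\log^2(1/\gb)$ with a constant $A$ (depending on $\rho$) to be tuned. By Proposition~\ref{annealed} together with the scaling relation \eqref{eq:Fann}, the annealed free energy in $d=1$ satisfies $\tf^{\a}(\gb,\rho) \sim c_0 \gb^2/(1+\rho)$ as $\gb \to 0^+$, so at this scale $\log \bbE^{Y,\rho}[Z_{T,\gb}^{Y,\p}] \sim T \tf^{\a}(\gb,\rho) \sim A c_0 \log^2(1/\gb)/(1+\rho)$.

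The main technical step would be a second-moment bound of the form
\begin{equation}
\bbE^{Y,\rho}\!\left[(Z_{T,\gb}^{Y,\p})^2\right] \;\leq\; C(\rho) \log(1/\gb) \cdot \bigl(\bbE^{Y,\rho}[Z_{T,\gb}^{Y,\p}]\bigr)^2.
\end{equation}
Expanding the square as an expectation over two independent copies $X^{(1)}, X^{(2)}$ of $X$ and integrating over $Y$ first, the left-hand side rewrites as the partition function of a homogeneous pinning model for $Y$ with two moving defects $X^{(1)}, X^{(2)}$, pinned at $Y_T = X^{(1)}_T = X^{(2)}_T$. The enhancement over the single-defect case is governed by the intersection local time of $X^{(1)}$ and $X^{(2)}$, which in $d=1$ is of order $\sqrt{T}$. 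The hard part is to control this two-defect partition function by only a logarithmic factor in $1/\gb$ at the chosen scale rather than a polynomial growth in $T$; this is the main obstacle of the proof, and requires exploiting carefully the intersection structure of one-dimensional random walks.

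Granted the second-moment bound, the Paley--Zygmund inequality yields
\begin{equation}
\bbP^{Y,\rho}\!\left(Z_{T,\gb}^{Y,\p} \geq \tfrac{1}{2}\bbE^{Y,\rho}[Z_{T,\gb}^{Y,\p}]\right) \;\geq\; \frac{1}{4C(\rho)\log(1/\gb)}.
\end{equation}
On this ``good'' event, $\log Z_{T,\gb}^{Y,\p}$ is essentially $\tfrac{1}{2}\log \bbE^{Y,\rho}[Z_{T,\gb}^{Y,\p}] \gtrsim A c_0 \log^2(1/\gb)/(2(1+\rho))$; on its complement, the crude deterministic lower bound $Z_{T,\gb}^{Y,\p} \geq p_T(Y_T)$ (just retaining the constraint in \eqref{eq:defmu}) gives $\log Z_{T,\gb}^{Y,\p} \geq -\tfrac{1}{2}\log T - O(1) \geq -O(\log(1/\gb))$ for $Y$ in the typical set $\{|Y_T|\leq C\sqrt{\rho T}\}$, which has probability close to one. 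Combining the two and choosing $A$ large enough so that the gain from the good event dominates the loss from the bad one, we obtain $\bbE^{Y,\rho}[\log Z_{T,\gb}^{Y,\p}] \geq c'\log(1/\gb)/(1+\rho)$. Dividing by $T = A\gb^{-2}\log^2(1/\gb)$ yields the announced bound $\tf(\gb,\rho) \geq c\gb^2/((1+\rho)\log(1/\gb))$, and combining with Proposition~\ref{annealed} shows that $\tf$ and $\tf^{\a}$ share the critical exponent $2$.
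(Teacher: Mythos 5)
There is a genuine gap: your entire argument hinges on the second-moment estimate $\bbE^{Y,\rho}[(Z_{T,\gb}^{Y,\p})^2]\leq C(\rho)\log(1/\gb)\,(\bbE^{Y,\rho}[Z_{T,\gb}^{Y,\p}])^2$ at scale $T=A\gb^{-2}\log^2(1/\gb)$, which you state but do not prove and yourself identify as ``the main obstacle.'' This is not a routine omission. The chosen scale exceeds the annealed correlation length $\gb^{-2}$ by a factor $\log^2(1/\gb)$, and it is precisely beyond the correlation length that second-moment methods for pinning-type models become delicate: the ratio $\bbE[Z^2]/(\bbE[Z])^2$ is governed by exponential moments of the collision local time of two replicas of $X$ under the $Y$-average, a genuine three-body computation in the marginal case $d=1$ (where the annealed specific heat exponent vanishes). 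Whether the ratio is $O(\log(1/\gb))$, $O(1)$, or divergent at this scale requires real work, and nothing in your sketch supplies it. As written, the proof is incomplete at its central step.

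The paper's proof avoids all of this with an elementary Jensen argument applied \emph{inside} the logarithm of the conditioned partition function: writing $\log Z_{t,\gb}^{Y,\p}=\log\bbE^X[e^{\gb L_t(X,Y)}\mid X_t=Y_t]+\log\bbP^X(X_t=Y_t)$, Jensen gives $\bbE^{Y,\rho}[\log\bbE^X[e^{\gb L_t}\mid X_t=Y_t]]\geq\gb\,\bbE^{Y,\rho}\bbE^X[L_t\mid X_t=Y_t]\geq C_1\gb\sqrt{t}/\sqrt{1+\rho}$ by the local central limit theorem (the conditional expected intersection time is of order $\sqrt{t}$ in $d=1$), while the pinning cost contributes only $-\log t$. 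Choosing $T=C_2(1+\rho)\gb^{-2}\log^2(1/\gb)$ makes the gain $C_1\sqrt{C_2}\log(1/\gb)$ beat the loss $2\log(1/\gb)$, and superadditivity \eqref{sup} finishes. Note that your choice of $T$ and the final division step coincide with the paper's, but the mechanism producing the $\log(1/\gb)$ gain is entirely different (deterministic first-moment lower bound on $L_t$ versus concentration of $Z$), and only the former is actually established. If you want to salvage your route you would need to prove the two-replica bound; otherwise the Jensen argument is both shorter and complete.
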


\begin{proof}

By Jensen inequality one has (for some constant $C_1$) that for every $t$ and $\rho\geq 0$
\begin{equation}
 \bbE^{Y,\rho}\left[ \log \bbE^X \left[ e^{\gb L_t(X,Y)}\  \big| \  X_t=Y_t \right] \right] \\
\geq \gb \bbE^{Y,\rho} \bbE^X \left[ L_t(X,Y)\  \big| \ X_t=Y_t \right]\ge C_1 \frac{\gb \sqrt{t}}{\sqrt{1+\rho}}.
\end{equation}
The last inequality can be obtain by integrating the local central limit Theorem (see \cite[Prop. 7.9, Ch. II]{Spitz} for the
discrete time version, the proof being identical for continuous time). Therefore
\begin{eqnarray}
 \bbE^{Y,\rho} \left[\log Z^{ Y,\p}_{t,\gb}\right] &=& \bbE^{Y,\rho}\left[ \log \bbE^X \left[ e^{\gb L_t(X,Y)} \big| X_t=Y_t \right] \right]
                 +\bbE^{Y,\rho}\left[ \log \bbP^X \left( X_t=Y_t \right)\right] \nonumber \\
    &\geq & C_1 \frac{\gb \sqrt{t}}{\sqrt{1+\rho}} - \log t,
\end{eqnarray}
where we also used \cite[Lemma 3.1]{BS08} to bound the second term (the bound being valid for $t$ large enough, say $t\geq t_0(\rho)$).
Now, if we set $T:= C_2(1+\rho)\gb^{-2} [\log (1/\gb)]^2$, the previous inequality holds for all $\gb\leq t_0^{-1/2}$, and gives
\begin{equation}
\bbE^{Y,\rho}\left[\log  Z^{ Y,\p}_{T,\gb} \right] \geq C_1\sqrt{C_2}\log (1/\gb) + 2\log \gb + O\big( \log \log (1/\gb) \big)     \ge  \log (1/\gb)
\end{equation}
if $C_2$ is large enough.
From \eqref{sup}, we finally have
\begin{equation}
\tf(\gb)  \geq  \frac{1}{T}\bbE^{Y,\rho}\left[\log  Z^{ Y,\p}_{T,\gb} \right] \ge \frac{1}{C_2(1+\rho)}\gb^2 \log (1/\gb)^{-1}.
\end{equation}
\end{proof}

\begin{rem}\rm
We believe that the factor $\log (1/\gb) ^{-1}$ above is an artifact of the proof and that $\tf(\gb,\rho)\sim c(\rho)\gb^2$.
A clear reason to believe so is to consider an alternative Brownian
model where $(Y_t)_{t\ge 0}$ is a realization Brownian motion with
covariance function $\bE^Y [Y_s, Y_t]=\rho(s\wedge t)$. The partition function is given by
\begin{equation}
 \mathcal Z^Y_{t,\gb}=\bE^X\left[e^{\gb L_t(X,Y)}\right],
\end{equation}
where under $\bP^X$, $X$ is a standard Brownian Motion (independent of $Y$) and $L_t(X,Y)$ the intersection local time between $X$ and $Y$.
For this model, Brownian scaling implies that $L_{t\gb^2}(X,Y)\stackrel{(\mathrm{law})}{=} \gb L_{t}(X,Y)$, which implies that there exists a constant $c(\rho)$ ($=\bar \tf(1,\rho)$) such that for all $\gb \ge 0$,
\begin{equation}
 \bar \tf(\gb,\rho):=\lim_{t\to\infty} \frac{1}{t}\bE^Y \log \mathcal Z^Y_{t,\gb}=c(\rho) \gb^2.
\end{equation}
This model should be the high-temperature scaling-limit of our random walk pinning model and hence share the same critical properties.

\end{rem}

\subsection{Low temperature asymptotics}

The quenched low-temperature asymptotic also exhibits contrasts with the annealed one. The reason is that to optimize the local-time,
$X$ has to follow $Y$ closely, which has an extra entropic cost. In the annealed case, one can force $X$ not to jump.
For the sake of simplicity, we present the result only in the case of the simple symmetric random walk in $\Z^d$ (for any 
$d\ge 1$) but the result holds in the more general framework given in Section \ref{fw}. This result gives again a contrasts with the pure model, see \eqref{tralala}.

\begin{theorem} \label{lowtemp}
When $Y$ is the simple symmetric random walk in $\Z^d$,
one has
\begin{equation}
 \tf(\gb,\rho)=\gb-\rho\log d\gb-1+o(1)  \quad \text{ as } \gb \to \infty.
\end{equation}
\end{theorem}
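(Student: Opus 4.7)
The plan is to establish two matching asymptotic inequalities, $\tf(\gb,\rho) \ge \gb - \rho\log(d\gb) - 1 + o(1)$ and $\tf(\gb,\rho) \le \gb - \rho\log(d\gb) - 1 + o(1)$, as $\gb\to\infty$. Throughout, denote by $N_t$ the number of jumps of $Y$ in $[0,t]$, by $\tau_1<\cdots<\tau_{N_t}$ its jump times, and by $e_1,\ldots,e_{N_t}$ the corresponding jump directions.

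For the lower bound, I apply the Gibbs variational inequality: for any probability measure $\nu$ on paths of $X$ (depending on $Y$) absolutely continuous with respect to $\bbP^X$ on $[0,t]$ and supported on $\{X_t=Y_t\}$,
\[
\log Z_{t,\gb}^{Y,\p} \;\ge\; \gb\,\bbE^\nu[L_t(X,Y)] - H(\nu\,|\,\bbP^X).
\]
I construct $\nu$ so that $X$ tracks $Y$ closely: under $\nu$, $X$ makes exactly $N_t$ jumps, at times $\tau_i + W_i$ in direction $e_i$, with $W_1,\ldots,W_{N_t}$ i.i.d.\ $\mathrm{Exp}(\lambda)$ for some $\lambda > 0$ to be optimized (modulo a negligible correction keeping the jump times ordered and within $[0,t]$). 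A direct computation of the Radon-Nikodym density yields $\bbE^\nu[L_t(X,Y)] = t - N_t/\lambda$ and $H(\nu\,|\,\bbP^X) = t - N_t + N_t \log(c_d\,\lambda)$, with $c_d$ a dimension-dependent combinatorial factor arising from the uniform law of $X$'s jump direction. Taking $\bbE^{Y,\rho}$ (so that $N_t$ is replaced by $\rho t$) and optimizing at $\lambda = \gb$ then produces the claimed lower bound.

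For the upper bound, my plan is to decompose $\log Z_{t,\gb}^{Y,\p}$ as a telescope over the successive jumps of $Y$. Introduce intermediate partition functions $Z^{(j)}$ corresponding to the configuration of $Y$ in which only its first $j$ jumps are retained: $Z^{(0)}$ is then the pure-model partition function, whose asymptotic $\log Z^{(0)}/t \to \gb - 1$ comes from \eqref{tralala}. I plan to control the expected marginal increment $\bbE^{Y,\rho}\bigl[\log Z^{(j)} - \log Z^{(j-1)}\bigr]$ from above by $-\log(d\gb) + 1 + o(1)$, uniformly in $j$. Summing these contributions via the Campbell formula for the Poisson process of $Y$-jumps (rate $\rho$) then yields the matching upper bound.

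The main obstacle is the upper bound: extracting the sharp leading constant inside the logarithm demands a careful quantitative control of how a single added $Y$-jump affects the partition function. A natural tool is the explicit formula for the pure pinning model (Appendix and Proposition~\ref{annealed}), combined with a surgery argument isolating the added jump's local effect. An alternative is to differentiate $\tf(\gb,\rho)$ with respect to $\rho$ via a Mecke-type identity and integrate from the pure case $\tf(\gb,0) = \gb - 1 + o(1)$, which would reduce matters to a uniform estimate of $\partial_\rho \tf$.
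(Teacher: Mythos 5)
Your lower bound is in essence the paper's own argument dressed in variational form (the paper writes $\log Z$ as a superadditive sum over the stretches between $Y$-jumps and small windows around each jump, and computes the contribution of $X$ making exactly one well-directed jump per window; this is the same as your tracking measure $\nu$). But as constructed it yields the wrong constant. With $W_i\sim\mathrm{Exp}(\lambda)$ \emph{one-sided}, the per-jump entropy cost is $\log(2d\lambda)-1+\lambda\,\bbE[W_1]$ and the optimization at $\lambda=\gb$ gives $\tf(\gb,\rho)\ge \gb-1-\rho\log(2d\gb)$, which misses the target by $\rho\log 2$ --- not $o(1)$. The missing factor $2$ comes from letting $X$ jump either shortly \emph{before} or shortly \emph{after} each jump of $Y$: in the paper's computation \eqref{eq:low2} the integral $\int e^{-\gb|s'|}\dd s'$ runs over both sides of $T_i$ and produces $2/\gb$, whence the $\frac{1}{d\gb}$ rather than $\frac{1}{2d\gb}$. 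In your language, $W_i$ must be a symmetric (two-sided) exponential of density $\tfrac{\lambda}{2}e^{-\lambda|w|}$; then $c_d=d$ and the bound closes. This is fixable, but since the entire content of the theorem is the precise constant $\rho\log(d\gb)$, it is not a cosmetic point.

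The upper bound, however, is not a proof but a programme, and the step you defer is exactly the hard one. You need that adding a single jump to the environment decreases $\bbE^{Y,\rho}[\log Z]$ by at least $\log(d\gb)+o(1)$, uniformly over the position of that jump relative to all the others and to the pinning constraint; neither the ``surgery'' route nor the Mecke/$\partial_\rho$ route is carried out, and no mechanism is given for why the effect of the added jump localizes. (Also, as stated your target increment $-\log(d\gb)+1+o(1)$ would only give $\tf\le\gb-1+\rho-\rho\log(d\gb)$, off by the constant $\rho$.) The paper avoids the telescoping altogether: it bounds $Z^{Y,\p}_{T_k^+,\gb}$ by a \emph{product} over the deterministic-in-$Y$ intervals $[T_{i-1}^+,T_i^-)$ and $[T_i^-,T_i^+)$, taking a supremum over the intermediate spatial points so the factors decouple. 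The free stretches are controlled by the quantitative finite-$t$ estimate of Lemma \ref{cocorico}, $\bbE^X[e^{\gb L_t(X,0)}]\le e^{(\gb-1+1/\gb)t}(1+1/\gb)$ (the limit statement \eqref{tralala} alone would not suffice here), and each window of width $\asymp\gb^{-2/3}$ around a $Y$-jump is handled by an exhaustive case analysis: one correct jump gives the main term $\frac{1}{d\gb}e^{(\gb-1)\gep_i}$, zero or misdirected jumps cost at most $e^{\gb^{1/3}}$ because the local time is then $\le\gb^{-2/3}$, and two or more jumps have probability $O(\gb^{-4/3})$. You would need to supply an argument of comparable precision before the upper bound can be considered established.
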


In general, for a walk $Y$ with a kernel jump $p_Y$ which as finite second moment, the result also holds with $\log d$ replaced by
$-\sum_{x\in \Z^d} p_Y(x)\log(p_Y(x)/2)$.

\begin{rem}\rm The proof of Theorem \ref{lowtemp} does not only gives the result but also a clear idea of how a typical path $X$ behaves
under the polymer measure at high temperature.
Essentially $X$ follows every jump of $Y$, and the distance between jumps of $X$ and $Y$ are i.i.d.\ exponential variables of mean $1/\gb$.
In particular the asymptotic contact fraction is close to 
$1-\rho \gb^{-1}$ (whereas it is of order $1-O(\gb^{-2})$ for the pure model).
\end{rem}

\medskip

The sequel of the paper is organized as follows:
\begin{itemize}
\item In Section \ref{proof} we prove Theorem \ref{smooth},
\item In Section \ref{lowtempp} we prove Theorem \ref{lowtemp},
\item In the Appendix  we prove some statements for the pure model, including Proposition \ref{annealed}.
\end{itemize}
Section \ref{proof} and Section \ref{lowtempp} are independent.

\section{Proof of Theorem \ref{smooth}}

\label{proof}
In the proof, we make use of the following three statements. The first two are extracted from
 Proposition~\ref{annealed} and Theorem~\ref{shiftRWPM}, the third one is extracted from \cite{BS09}.

\begin{proposition}\label{monont}
 For $d\ge 3$, we have
\begin{itemize}
 \item[(i)] $\gb_c^{\a}(\rho)= \frac{1+\rho}{G}$,
 \item[(ii)] for any $\rho>0$, one has $\gb_c^{\a}(\rho)<\gb_c(\rho)$,
 \item[(iii)] the function $\rho\mapsto \gb_c(\rho)/(1+\rho)$ is non-decreasing \cite[Thm 1.3]{BS09}.
\end{itemize}
\end{proposition}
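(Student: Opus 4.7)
The plan is easy: Proposition \ref{monont} is a collation of three facts, two of which have already been recorded earlier in the paper and one of which is imported verbatim from \cite{BS09}. So my ``proof'' is really a matter of stitching together the references and checking that the translations are routine.

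For part (i), I would start from the scaling identity \eqref{eq:Fann}, $\tf^{\a}(\gb,\rho)=(1+\rho)\tf(\gb/(1+\rho))$, together with the identity $\gb_c(0)=G^{-1}$ proved in Proposition~\ref{annealed}. Since $\tf$ is non-decreasing and vanishes exactly on $(-\infty,G^{-1}]$, the annealed free energy $\tf^{\a}(\gb,\rho)$ is positive iff $\gb/(1+\rho)>G^{-1}$, giving $\gb_c^{\a}(\rho)=(1+\rho)/G$. For part (ii), this is exactly the content of Theorem~\ref{shiftRWPM} in the regime $d\ge 3$: the theorem states that in that range of dimensions $\gb_c(\rho)>\gb_c^{\a}(\rho)>0$ for every $\rho>0$, which is nothing other than (ii). So no additional argument is needed.

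For part (iii), I would simply invoke \cite[Thm 1.3]{BS09}. The argument there, if one wanted to reprove it, proceeds by a comparison between the quenched partition functions at different jump rates $\rho$, using the fact that a random walk of rate $\rho$ can be obtained as a time change of a walk of rate $\rho'>\rho$, combined with a convexity/superadditivity argument (cf.\ Remark~\ref{rem:superad}) to control the effect of the time change on $\tf(\gb,\rho)$. The natural scaling that appears is $(1+\rho)$, and one deduces monotonicity of $\rho \mapsto \gb_c(\rho)/(1+\rho)$. Since the present paper uses this only as an input to the proof of the smoothing Theorem~\ref{smooth}, I would not reprove it here.

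The only place where there is anything to ``check'' is the compatibility of the various normalizations (the jump rate for $X$ is $1$, for $Y$ is $\rho$, and the walk $X-Y$ has rate $1+\rho$), so that (i) really comes out with the factor $(1+\rho)$ and not something else — but this is a routine verification from the definitions. There is no genuine obstacle: the statement is assembled purely for convenience of reference in Section~\ref{proof}.
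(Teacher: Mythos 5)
Your proposal is correct and matches the paper exactly: the paper offers no independent proof of Proposition~\ref{monont}, but simply notes that (i) and (ii) are extracted from Proposition~\ref{annealed} (via the scaling identity \eqref{eq:Fann}) and Theorem~\ref{shiftRWPM}, while (iii) is imported from \cite[Thm 1.3]{BS09}. Nothing further is needed.
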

Let $\rho$ be fixed and $d\ge 3$ be fixed.
Given $\gb>\gb_c(\rho)$, we define $\rho'=\rho'(\gb)$ by
\begin{equation}
 \rho':=\rho+G(\gb-\gb_c(\rho)).
\end{equation}
Note that $\tf(\rho',\gb)=0$. Indeed, we have
\begin{equation}
 \frac{1+\rho'}{1+\rho}=1+G\frac{\gb-\gb_c(\rho)}{1+\rho}=1+\frac{\gb-\gb_c(\rho)}{\gb_c^{\a}(\rho)}\ge \frac{\gb}{\gb_{c}(\rho)},
\end{equation}
so that by $(iii)$ of the above proposition, $\gb\le \gb_c(\rho')$.

Our strategy to prove Theorem \ref{smooth} is to find a lower bound for $\tf(\rho',\gb)$ that involves $\tf(\rho,\gb)$, by considering
the contribution of exceptional (under $\bbP^{Y,\rho'}$) stretches where the empirical jump rate of $Y$ is of order $\rho$.

First we bound from below the probability that under $\bbP^{Y,\rho'}$, the partition function $ Z^{Y,\p}_{L,\gb}$ 
is greater than $\exp\big(L(1-\gep)\tf(\gb,\rho)\big)$.

\begin{lemma}\label{boundp}
For any $\gep>0$, one can find $L_0$ (depending on $\gb, \rho$ and $\gep$) such that for all $L\geq L_0$
\begin{equation}
 \log \left(\bbP^{Y,\rho'}\left\{\log Z^{Y,\p}_{L,\gb} > L(1-\gep)\tf(\gb,\rho) \right\}\right)\ge -L\frac{(\rho'-\rho)^2}{\rho'}-\log 4.
\end{equation}
\end{lemma}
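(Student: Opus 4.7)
Let $A := \{\log Z^{Y,\p}_{L,\gb} > L(1-\gep)\tf(\gb,\rho)\}$. By the convergence of the quenched free energy from \cite{BS08}, $\frac{1}{L}\log Z^{Y,\p}_{L,\gb}\to\tf(\gb,\rho)$ in $\bbP^{Y,\rho}$-probability, so $\bbP^{Y,\rho}(A)\to 1$ as $L\to\infty$, and we may assume $\bbP^{Y,\rho}(A)\ge 7/8$ for $L$ large enough. My plan is to perform a change of measure from $\bbP^{Y,\rho'}$ back to $\bbP^{Y,\rho}$---under which $A$ is typical---and to control the entropic cost of the tilt via a carefully chosen augmentation of the event combined with one elementary algebraic inequality.

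\textbf{Radon--Nikodym and the augmented event.} Writing $N_L$ for the total number of jumps of $Y$ on $[0,L]$, a $\mathrm{Poisson}(\rho L)$ random variable under $\bbP^{Y,\rho}$, and recalling that $Y$ is a continuous-time random walk whose jump kernel does not depend on the rate, one has
\begin{equation*}
 \frac{\dd\bbP^{Y,\rho'}}{\dd\bbP^{Y,\rho}}(Y) \;=\; e^{-(\rho'-\rho)L}\left(\frac{\rho'}{\rho}\right)^{N_L}.
\end{equation*}
The key idea is to work not with $A$ but with $A':=A\cap\{N_L\ge\lceil\rho L\rceil\}$. Since the CLT gives $\bbP^{Y,\rho}(N_L\ge\lceil\rho L\rceil)\to 1/2$, for $L$ large we have $\bbP^{Y,\rho}(N_L\ge\lceil\rho L\rceil)\ge 3/8$; combined with $\bbP^{Y,\rho}(A)\ge 7/8$, the union bound yields $\bbP^{Y,\rho}(A')\ge 1/4$.

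\textbf{Jensen and the algebraic step.} Since $\bbP^{Y,\rho'}(A)\ge\bbP^{Y,\rho'}(A')$, Jensen's inequality applied to the concave function $\log$ gives
\begin{equation*}
 \log\bbP^{Y,\rho'}(A')\;\ge\;\log\bbP^{Y,\rho}(A') + \log(\rho'/\rho)\,\bbE^{Y,\rho}[N_L\mid A'] - (\rho'-\rho)L.
\end{equation*}
On $A'$ we have $N_L\ge\rho L$ by construction, hence $\bbE^{Y,\rho}[N_L\mid A']\ge\rho L$, and this lower bound is useful because $\log(\rho'/\rho)>0$. Applying the elementary inequality $\log x\ge 1 - 1/x$ valid for $x\ge 1$ to $x=\rho'/\rho$ then gives
\begin{equation*}
 \rho L\log(\rho'/\rho) - (\rho'-\rho)L \;\ge\; L(\rho'-\rho)\left(\frac{\rho}{\rho'}-1\right) \;=\; -L\,\frac{(\rho'-\rho)^2}{\rho'},
\end{equation*}
and combining with $\log\bbP^{Y,\rho}(A')\ge -\log 4$ produces exactly the desired bound.

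\textbf{Where the difficulty lies.} The argument is essentially a one-liner once the augmentation $A\cap\{N_L\ge\rho L\}$ is recognized; this choice of event is the only substantive design decision. Without it, one would be forced to estimate $\bbE^{Y,\rho}[N_L\mid A]$ via Cauchy--Schwarz, producing an error whose magnitude depends sensitively on the rate of $\bbP^{Y,\rho}(A^c)\to 0$; making this error fit into the constant $-\log 4$ would require a concentration estimate for $\log Z^{Y,\p}_{L,\gb}$ around its mean. Intersecting with $\{N_L\ge\rho L\}$ forces $\bbE^{Y,\rho}[N_L\mid A']\ge\rho L$ deterministically and bypasses this issue entirely.
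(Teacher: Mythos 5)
Your proof is correct, and it takes a genuinely different route from the paper's. The paper applies Cauchy--Schwarz to $\bbP^{Y,\rho}(A)=\bbE^{Y,\rho'}\bigl[\tfrac{\dd\bbP^{Y,\rho}}{\dd\bbP^{Y,\rho'}}\ind_A\bigr]$ and then computes the second moment of the likelihood ratio exactly, using the Poisson generating function, obtaining $\bbE^{Y,\rho'}[(\dd\bbP^{Y,\rho}/\dd\bbP^{Y,\rho'})^2]=e^{L(\rho'-\rho)^2/\rho'}$; combined with $\bbP^{Y,\rho}(A)\ge 1/2$ this yields the bound, with the $\log 4$ coming from $(1/2)^2$. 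You instead tilt in the opposite direction and use the first-order (Jensen/relative-entropy) bound $\log\bbE[e^{cN}\mid A']\ge c\,\bbE[N\mid A']$, which requires controlling $\bbE^{Y,\rho}[N_L\mid A']$ from below; your augmentation $A'=A\cap\{N_L\ge\lceil\rho L\rceil\}$ makes that control deterministic and costs only a factor $1/4$ in probability via the union bound, reproducing the same $-\log 4$. Both arguments land on the identical exponent because $\rho\log(\rho'/\rho)-(\rho'-\rho)\ge -(\rho'-\rho)^2/\rho'$ matches the exact Gaussian-type cost that the paper's second-moment computation produces. The paper's version is shorter because the Poisson second moment happens to be computable in closed form; yours is more robust in that it never needs the square of the Radon--Nikodym derivative to be integrable, at the price of the (clever but essential) event augmentation. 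All the individual steps check out: the Radon--Nikodym derivative is correct since the jump kernel is rate-independent, $\bbP^{Y,\rho}(A)\to1$ follows from $\tf(\gb,\rho)>0$ for $\gb>\gb_c(\rho)$ and the a.s.\ convergence of the free energy, and the CLT for $\mathrm{Poisson}(\rho L)$ gives $\bbP^{Y,\rho}(N_L\ge\lceil\rho L\rceil)\to 1/2$.
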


\begin{proof}

From the definition of the free-energy one can find $L_0$ such that for all $L\geq L_0$
\begin{equation}
\bbP^{Y,\rho}(A):= \bbP^{Y,\rho}\left\{\log Z_{L,\gb}^{Y,\p} > L(1-\gep)\tf(\gb,\rho) \right\} \ge 1/2.
\end{equation}
By Cauchy-Schwartz inequality, we have
\begin{equation}\label{CStz}
 1/4\le  \bbP^{Y,\rho}(A)^2 =\left(\bbE^{Y,\rho'}\left[\frac{\dd \bbP^{Y,\rho}}{\dd \bbP^{Y,\rho'}}\ind_A\right]\right)^2
\le \bbE^{Y,\rho'}\left[\left(\frac{\dd \bbP^{Y,\rho}}{\dd \bbP^{Y,\rho'}}\right)^2\right]
\bbP^{Y,\rho'}(A).
\end{equation}
Let $\gk_L^Y$ denote he number of jumps of the walk $Y_t$ in $[0,L]$. Under $\bbP^{Y,\rho}$, it is a Poisson 
variable of mean $\rho L$. One has
\begin{equation}
 \frac{\dd \bbP^{Y,\rho}}{\dd \bbP^{Y,\rho'}}=e^{L(\rho'-\rho)}\left(\frac{\rho L}{\rho' L}\right)^{\gk^Y_L},
\end{equation}
and therefore
\begin{equation}
\bbE^{Y,\rho'}\left[\left(\frac{\dd \bbP^{Y,\rho}}{\dd \bbP^{Y,\rho'}}\right)^2\right]
=e^{L(\rho'-2\rho)}\sum_{k=0}^{\infty}\frac{1}{k!}\left(\frac{(\rho L)^2}{\rho' L }\right)^k
=e^{L\frac{(\rho'-\rho)^2}{\rho'}},
\end{equation}
which inserted in \eqref{CStz} gives the result.
\end{proof}

We keep the notation $A:= \left\{\log Z^{Y,\p}_{L,\gb} > L(1-\gep)\tf(\gb,\rho) \right\}$ and write $q:=\bbP^{Y,\rho'}(A)$. 
Let an arbitrary $\gep>0$ be fixed,
consider $L$ large enough so that Lemma \ref{boundp} is valid, and that some conditions later mentioned in the proof are fullfilled.


For our purpose we find a lower bound involving $\tf(\gb,\rho)$ for  $\bbE^{Y,\rho'}\left[\log Z_{T,\gb}^{Y,\p}\right]$ for a system length  
$T:= L\lceil q^{-1} \rceil$. Then we use the fact that $\bbE^{Y,\rho'}\left[\log Z_{T,\gb}^{Y,\p}\right]\le T\tf(\gb,\rho')=0$ (recall \eqref{sup}).

We divide the length of the system into $\lceil q^{-1} \rceil$ blocks of size $L$,
${B_i:= [(i-1)L, iL]}$, for $i\in \{1,\dots, \lceil q^{-1}\rceil\}$.
With this definition, under $\bbP^{Y,\rho'}$, the random variables
 $(Z^{Y,\p}_{B_i,\gb})$ are i.i.d. distributed with the same distribution as $Z_{L,\gb}^{Y,\p}$.




Define
\begin{equation}
H := \left\{ \exists  i \in [1, \lceil q^{-1} \rceil]\cap \bbZ, \
               \log Z^{Y,\p}_{B_i,\gb}\ge L(1-\gep)\tf(\gb,\rho)  \right\}.
\end{equation}
We define also $\cA=\cA(Y)$ the set of blocks $B_i$ such that $\log Z^{Y,\p}_{B_i,\gb}\ge  L(1-\gep)\tf(\gb,\rho)$.

Note that with our choice for the number of blocks considered, $\bbP^{Y,\rho'}(H)=1-(1-q)^{\lceil q^{-1} \rceil}$ is 
uniformly bounded away from 
zero and one.


One splits $\bbE^{Y,\rho'}\left[\log Z_{T,\gb}^{Y,\p}\right]$ into two contributions according to the occurrence of $H$.
\begin{equation}\label{toutit}
\bbE^{Y,\rho'}\left[\log Z_{T,\gb}^{Y,\p}\right] = \Ey\left[\ind_{H^c}\log Z_{T,\gb}^{Y,\p} \right] + \Ey\left[\ind_{H}\log Z_{T,\gb}^{Y,\p} \right].
\end{equation}

$\bullet$ The first term is dealt with easily, using
that $Z_{T,\gb}^{Y,\p} \geq p_T(Y_{T})$, and then
\begin{equation}
\Ey\left[\ind_{H^c}\log Z_{T,\gb}^{Y,\p} \right] \geq \Ey \left[ \log p_T (Y_{T})\right]\label{fterm}
   \geq  -(1+\gep)\frac{d}{2} \log T,
\end{equation}
where the last estimate comes from \cite[Lemma 3.1]{BS08}, (provided that $T=L\lceil q^{-1} \rceil$ is large enough).

$\bullet$ For the second term, we only have to decompose
the expectation according to the position of
the first block for which $ \log Z_{B_i,\gb}^{Y,\p}\ge L(1-\gep)\tf(\gb,\rho)$, 


\begin{equation}
 \Ey\left[\ind_{H}\log Z_{T,\gb}^{Y,\p} \right]  \geq 
    \sum_{i=1}^{\lceil q^{-1} \rceil} \Ey \left[ \ind_{\{{B_i}\in\cA, B_j\notin \cA \  \forall 1\le j<i \}} \log Z_{T,\gb}^{Y,\p} \right].
\label{eq:Gblock}
\end{equation}
By \eqref{superaddi}, one obtains on the event 
$\{{B_i}\in\cA, B_j\notin \cA \ \ \forall 1\le j<i \}$, that
\begin{multline}
 Z_{T,\gb}^{Y,\p}\ge Z^{Y,\p}_{(i-1)L,\gb}Z_{B_i,\gb}^{Y,\p} Z^{Y,\p}_{[T-iL,T],\gb} \\
  \ge e^{L(1-\gep)\tf(\gb,\rho)}p_{(i-1)L}(Y_{(i-1)L})p_{T-iL}(Y_T-Y_{iL}).
\end{multline}
Therefore we get
\begin{multline}
 \Ey \left[ \ind_{\{{B_i}\in\cA, B_j\notin \cA \ \forall 1\le j<i \}} \log Z_{T,\gb}^{Y,\p} \right]\ge\\
 \Ey\Big[ \ind_{\{{B_i}\in\cA, B_j\notin \cA \ \forall 1\le j<i \}} \big( L(1-\gep)\tf(\gb,\rho)\\
+ \log p_{(i-1)L}(Y_{(i-1)L})+\log p_{T-iL}(Y_T-Y_{iL})\big) \Big].
\end{multline}
We can estimate separately the three terms on the right-hand side. Note that by block independence,
$\bbP^{Y,\rho'}\left[ B_i\in\cA, B_j\notin \cA \ \forall 1\le j<i \right]=q(1-q)^{(i-1)}$. This gives the value of the first term.
When $i\ne \lceil q^{-1} \rceil$ (in which case the third term is zero), using block
independence and translation invariance, the third one can be estimated 
as follows 
\begin{multline}
 \Ey\left[ \ind_{\{{B_i}\in\cA, B_j\notin \cA\ \forall 1\le j<i \}}\log p_{T-iL}(Y_T-Y_{iL})\right]=
q(1-q)^{i-1}\Ey\left[p_{T-iL}(Y_{T-iL})\right]\\
\ge -(1-q)^{i-1} q(1+\gep)\frac{d}{2} \log(T-iL)\ge -q(1+\gep)\frac{d}{2} \log T,
\end{multline}
where the last inequality is given again by \cite[Lemma 3.1]{BS08}, provided $L$ is large enough.
When $i\ne 1$ (in which case the second term is zero), block independence gives us 
\begin{eqnarray}
\lefteqn{ \Ey\left[ \ind_{\{{B_i}\in\cA, B_j\notin \cA \ \forall 1\le j<i \}}\log p_{(i-1)L}(Y_{(i-1)L})\right]}  \nonumber\\
 & &  =  q\Ey\left[ \ind_{\{B_j\notin \cA \ \forall 1\le j<i \}}\log p_{(i-1)L}(Y_{(i-1)L})\right]
\ge q \Ey\left[\log p_{(i-1)L}(Y_{(i-1)L})\right] \nonumber\\
 & & \ge -q(1+\gep)\frac{d}{2}\log((i-1)L) \ge -q(1+\gep)\frac{d}{2} \log T.
\end{eqnarray}
Summing along all the contributions, one gets
\begin{equation}
\Ey\left[\ind_{H}Z_{T,\gb}^{Y,\p} \right] \ge\left(1-(1-q)^{\lceil q^{-1}\rceil}\right) L(1-\gep)\tf(\rho,\gb)-(1+\gep)d\log T.
\end{equation}
Together with \eqref{toutit} and \eqref{fterm} this gives
\begin{equation}
0\ge \Ey\left[\log Z_{T,\gb}^{Y,\p} \right] \ge\left(1-e^{-1}\right) L(1-\gep)\tf(\rho,\gb)-(1+\gep)\frac{3d}{2}\log T,
\end{equation}
and hence
\begin{equation}
 \tf(\rho,\gb)\le \frac{1+\gep}{1-\gep}\frac{3d}{2(1-e^{-1})}\frac{\log ( L\lceil q^{-1}\rceil)}{L}.
\end{equation}
From Lemma \ref{boundp}, one has (when $L$ is large enough)
\begin{equation}
 \frac{\log (L\lceil q^{-1}\rceil)}{L}\le (1+\gep)\frac{(\rho'-\rho)^2}{\rho'}\le (1+\gep)\frac{G^2\left(\gb-\gb_c(\rho)\right)^2}{\rho},
\end{equation}
which, as $\gep$ is arbitrary, gives the result (here we also use $(1-e^{-1})> 1/2)$.

\qed

\section{Proof of Theorem \ref{lowtemp}}\label{lowtempp}

Our bounds are obtained by decomposing the partition function into a product, each term of the product corresponding to a time interval.
 
To describe our decomposition, we need some definitions.
We fix a typical realization of $Y$. Let $T_i$ be the time of the $i$-th jump.
For some $\gb$ (large) fixed and $i\geq 1$, we define the times
\begin{equation}
 \begin{split}
  T_i^- = T_i - \gep_i^- \indent \text{with } \gep_i^- =\gb^{-2/3} \wedge \frac{T_i-T_{i-1}}{2}, \\
  T_i^+ = T_i + \gep_i^+ \indent \text{with } \gep_i^+ =\gb^{-2/3} \wedge \frac{T_{i+1}-T_i}{2},
 \end{split}
\end{equation}
where we used the convention that $T_0=0$, and set also $T_0^+=0$. The value $\gb^{-2/3}$ in the definition is an {\it ad hoc}
choice for the proof of the upper bound, and has no deep signification.
We also define $\gep_i:=\gep_i^- + \gep_i^+=T_i^+ - T_i^-$.
We bound $Z^{Y,\p}_{T_k^+,\gb}$ by bounding the contributions of the intervals $[T_{i-1}^+, T_{i}^-)$ and $[T_i^-, T_i^+)$, $i=1 \dots k$.

\textbf{Lower Bound.}
To lower bound
$\log Z^{Y,\p}_{T_k^+,\gb}$, we use superadditivity:
\begin{equation}
\label{eq:lower}
 \log Z^{Y,\p}_{T_k^+,\gb} \geq \sum_{i=1}^k \left(\log Z^{Y,\p}_{[T_{i-1}^+, T_i^-]} + \log Z^{Y,\p}_{[T_{i}^-, T_i^+]}\right).
\end{equation}
Let us note that $Y$ makes no jump on $[T_{i-1}^+, T_i^-]$,
so that
\begin{equation}
Z^{Y,\p}_{[T_{i-1}^+, T_i^-]}= \bbE^X\left[e^{L_{T_i^- - T_{i-1}^+}(X,0)}\ind_{\{X_{T_i^- - T_{i-1}^+}=0\}}\right].
\end{equation}
Then, constraining $X$ not to jump either, we get that for all $t\geq 0$
\begin{equation}
 \bbE^X\left[e^{L_t(X,0)}\ind_{\{X_t=0\}}\right]\ge e^{\gb t} \bbP^X \left(X_s=0 \text{ for all } 0\le s\le t \right) = e^{(\gb-1)t}.
\label{eq:low1}
\end{equation}
To bound $\log Z^{Y,\p}_{[T_{i}^-, T_i^+]}$, let use notice that for $i\geq1$, $Y$ makes one jump (and only one)
in the interval $[T_i^-, T_i^+)$ (of length $\gep_i$). Hence
\begin{equation} 
Z^{Y,\p}_{[T_{i}^-, T_i^+]}=\bbE^X_{0}\left[e^{L_{\gep_i}(X,Y^{(i)})} \ind_{\{X_{\gep_i}=Y_{\gep_i}\}} \right],
 \end{equation}
where $Y^{(i)}=(Y^{(i)}_s)_{s\in [0,\gep_i)}$ is defined by $Y^{(i)}_s=0$ for
$s\in[0,\gep_i^-)$,  $Y^{(i)}_s=e_1=(1,0,\dots,0)$ for $s \in [\gep_i^-,\gep_i)$ (by symmetry and the fact that the random walk is neirest neighbor the direction of the jump has no importance).
We will compute the contribution of the terms in which
$X$ does one and only one jump, furthermore in the right direction $e_1$.
We have
\begin{multline}
\label{eq:low2}
\bbE^X_{0}\left[e^{L_{\gep_i}(X,Y^{(i)})}\ind_{\{ X \text{ makes one jump, } X_{\gep_i}=e_1\}}  \right]=
\frac{1}{2d}e^{-\gep_i}\int_{0}^{\gep_i} e^{\gb(\gep_i-|s-\gep_i^-|)}\dd s\\
=\frac{1}{2d}e^{-\gep_i}\int_{-\gep_i^-}^{\gep_i^+} e^{\gb(\gep_i-|s'|)}\dd s'=
\frac{e^{(\gb-1)\gep_i}}{d\gb }\left[1-\frac{e^{-\gb \gep_i^-}}{2}-\frac{e^{-\gb \gep_i^+}}{2} \right].
\end{multline}
The term $\frac{1}{2d}e^{-\gep_i} \dd s$ is the probability of having only one jump in 
$[0,\gep_i]$ located in the time increment $[s,s+\dd s]$, that goes in the right direction (cf.\ factor $(2d)^{-1}$), $\gep_i-|s-\gep_i^-|$ is the value of the intersection
time of $X$ and $Y$ on $[0,\gep_i)$ if $X$ jumps at time $s$.
Combining \eqref{eq:low1}-\eqref{eq:low2} with the inequality \eqref{eq:lower}, we obtain
\begin{equation}
\frac{1}{T_k^+}\log Z^{Y,\p}_{T_k^+,\gb}\ge  (\gb-1)+
   \frac{k}{T_k^+}\left[-\log(d\gb)+ \frac{1}{k}\sum_{i=1}^{k}\log \left(1-\frac{e^{-\gb\gep_i^-}+e^{-\gb\gep_i^+}}{2}\right)\right].
\end{equation}

The sequence $\log \left(1-\frac12 \left( e^{-\gb\gep_i^-}+e^{-\gb\gep_i^+} \right)\right)$ is ergodic (the dependence between terms has range only one).
Then using the ergodic theorem one obtains that $\bbP^{Y,\rho}$-a.s.
\begin{equation}
 \lim_{k\to\infty} \frac{1}{k}\sum_{i=1}^{k}\log \left(1-\frac{e^{-\gb\gep_i^-}+e^{-\gb\gep_i^+}}{2}\right)
=\bbE^{Y,\rho}\left[\log \left(1-\frac{e^{-\gb\gep_1^-}+e^{-\gb\gep_1^+}}{2} \right)\right]=o(1),
\end{equation}
where $o(1)$ is with respect to $\gb\to \infty$. The last inequality is easy to get, as $\gep_1^{\pm}$ are
truncated exponential variables of mean $1/2$ and that the truncation at $\gb^{-1/3}$ is harmless.
Moreover, by the law of large numbers, we have that $\bbP^{Y,\rho}$-a.s,
\begin{equation}
 \lim_{k\to \infty} \frac{k}{T_k^+}= \lim_{k\to \infty} \frac{k}{T_k}= \rho.
\end{equation}
This gives us
\begin{equation}
 \lim_{k\to\infty} \frac{1}{T_k^+}\log Z^{Y,\p}_{T_k^+,\gb} \ge (\gb-1)-\rho\log d\gb+o(1).
\end{equation}
\smallskip

\textbf{Upper Bound.}
We are now ready to prove the upper bound.
We cut the trajectory $X$ on the intervals $[T_{i-1}^+,T_i^-)$ and $[T_i^-, T_i^+)$ for $i\geq 1$ and use the properties
of $Y$ on these intervals, the way we did for the lower bound. In order to get an upper bound,
we have to maximize over the contribution of intermediate points, 
\begin{equation}
 Z^{Y,\p}_{T_k^+,\gb}\le  \prod_{i=1}^{k} \max_{x_1\in \Z^d} \bbE_{x_1}^X\left[e^{\gb L_{T_i^- - T_{i-1}^+}(X,0)}\right]
   \max_{x_2\in \Z^d} \bbE_{x_2}^X\left[e^{\gb L_{\gep_i}(X,Y^{(i)})}\right].
\end{equation}
We can bound the first part of the terms by using Lemma \ref{cocorico} (that we prove later on):
\begin{equation}
 \max_{x\in \Z^d} \bbE_{x}^X\left[e^{\gb L_{t}(0,X)}\right]=
   \bbE_{0}^X\left[e^{L_{t}(0,X)}\right]\le e^{\left(\gb-1+\frac{1}{\gb}\right)t}\left(1+\frac{1}{\gb}\right),
\end{equation}
where the first equality is due to Markov property for $X$ applied at the
first hitting time of zero, and the fact that $\bbE_{0}^X\left[e^{L_{t}(0,X)}\right]$ is a non-decreasing function of $t$.

For the other terms one has to analyze the contributions of all possible trajectories of $X$.
The main contribution is given by paths $X$ starting from zero that make one jump and such that $X_{\gep_i}=e_1$: we already computed
the value of this contribution in \eqref{eq:low2}.
If $X$ makes no jump or one jump but not in the right direction (or if $X$ makes at most one jump but does not start from zero),
it spends some portion of the time away from $Y$ and then $L_{\gep_i}(X,Y^{(i)})\le \gep_1^- \vee \gep_i^+ \leq \gb^{-2/3}$.
Therefore the total contribution of such paths is bounded by
$e^{\gb^{1/3}}$.
The probability that $X$ makes more than two jumps is bounded by $4 \gb^{-4/3}$ if $\gb$ is large enough (the number of jump
is a Poisson variable of parameter $\gep_i$, which is at most $2\gb^{-2/3}$).
In addition $e^{\gb L_{\gep_i}(X,Y^{(0)})}\le e^{\gb \gep_i}$ so that the total contribution of paths making more than two jumps is bounded by 
$4 \gb^{-4/3}  e^{\gb \gep_i}$.
Hence we have
\begin{eqnarray}
  \max_{x\in \Z^d} \bbE_x^X\left[e^{\gb L_{\gep_i}(X,Y^{(i)})}\right] &\le&
         \frac{1}{d\gb}e^{(\gb-1)\gep_i} + e^{\gb^{1/3}} + 4\gb^{-4/3}  e^{\gb \gep_i} \nonumber\\
     &\le& \frac{1}{d\gb}e^{(\gb-1)\gep_i} \left( 1+ C(\gb e^{\gb^{1/3}-\gb\gep_i}+\gb^{-1/3}) \right).
\end{eqnarray}
Combining all these inequalities one finally gets
\begin{equation}
  Z^{Y,\p}_{T_k^+,\gb}\le e^{(\gb-1+\gb^{-1})T_k^+}(1+\gb^{-1})^k(d\gb)^{-k}\prod_{i=1}^k \left( 1+  C (\gb e^{\gb^{1/3}-\gb\gep_i}+\gb^{-1/3}) \right),
\end{equation}
and hence
\begin{multline}
 \frac{1}{T_k ^+} \log Z^{Y,\p}_{T_k^+,\gb} \le \gb-1+\gb^{-1} \\
     +  \frac{k}{T_k^+} \left[ \log (1+\gb^{-1}) -\log(d\gb) +
   \frac{1}{k}\sum_{i=1}^k \log \left( 1+  C (\gb e^{\gb^{1/3} - \gb \gep_i}+\gb^{-1/3}) \right) \right].
\end{multline}

Applying the ergodic theorem, one obtains that $\bbP^{Y,\rho}$-a.s.
\begin{eqnarray}
 \lim_{k\to\infty} \frac{1}{k}\sum_{i=1}^{k}\log \left( 1+  C (\gb e^{\gb^{1/3} - \gb \gep_i}+\gb^{-1/3})\right)
 & =& \bbE^{Y,\rho}\left[\log \left( 1+  C (\gb e^{\gb^{1/3} - \gb \gep_1}+\gb^{-1/3}) \right)\right] \nonumber\\
 &= &  o(1) ,
\end{eqnarray}
where we used the definition of $\gep_1$ to estimate the last expectation ($\gep_i$ is equal to $2\gb^{-2/3}$ with probability $1-O(\gb^{-2/3})$ when $\gb$ is large).

Furthermore, as already noticed, $k/T_k^+$ converges almost surely to $\rho$, so that we have
\begin{equation}
\lim_{k\to \infty} \frac{1}{T_k^+} \log Z^{Y,\p}_{T_k^+,\gb}\le \gb-1-\rho\log\gb d+o(1).
\end{equation}
\qed

\appendix

\section{The homogeneous case}
\label{sec:homo}

We give in this section several results on the pure model, which are to be compared with the results on the quenched system.
In the homogeneous case, (when $\rho=0$) the model is just the pinning of a random walk on a deterministic defect line $ \bbR_+\times \{0\}$. 
It turns out here that a more general view point makes the problem easier to solve, and that is the reason why we introduce now a more general version of our pinning model.

We consider two increasing sequences $(\tau'_i)_{i\geq1}$ and $(\tau_i)_{i\geq0}$ such that $\tau_0=0$, and
$(\tau'_{i+1}-\tau_i)_{i\ge 0}$ and $(\tau_i-\tau'_i)_{i\ge 1}$ are two independent i.i.d.\ sequences,
where $\tau'_1$ is a mean~$1$ exponential variable, and where the distribution of $\tau_1-\tau'_1$ has
support in $\R_+\cup \{\infty\}$ (if $\tau_n=\infty$ for some $n$, we choose by
convention $\tau'_k$ and $\tau_k=\infty$ for all $k\ge n$). We further assume that the distribution of $(\tau_{1}-\tau'_1)_{1\ge 0}$, when restricted to  $\bbR_+$, is
absolutely continuous with respect to the Lebesgue measure, with density
that we denote by $K(\cdot)$, and
that $\int_0^{\infty} \exp(\gep t)K(t)\dd t=\infty$ for all $\gep>0$.
We denote by
$\mu$ the joint law of the two sequences, and we remark that under $\mu$, both sequences $(\tau_i)_{i\geq 0}$ and $(\tau'_i-\tau'_1)$
are renewal sequences.
We may use the notation $K(\infty)=\mu(\tau_1-\tau'_1=\infty)=\mu(\tau_1=\infty)$.


Set $\cT := \bigcup_{i=0}^{\infty} [\tau_i,\tau'_i)$, and call it the set of contact.

Given $\gb\in \bbR$, we now modify the law of the sequences $(\tau'_i)_{i\geq 0}$ and $(\tau_i)_{i\geq 0}$ by introducing a Gibbs transform
$\mu^{\p}_{t,\gb}$
of the measure $\mu$
\begin{equation}
\label{eq:defmu2}
 \frac{\dd \mu_{t,\gb}^{\p}}{\dd \mu} = \frac{e^{ \gb |\cT \cap [0,t]|}}{Z_{t,\gb}^{\p}}  \ind_{\{t\in\cT\}},
\end{equation}
where $|A|$ stands for the Lebesgue measure of a set $A \subset\bbR$, and where
\begin{equation}
Z_{t,\gb}^{\p}=\mu\left[e^{\gb |\cT\cap [0,t]|} \ind_{\{t\in\cT\}}\right].
\end{equation}
We also define
\begin{equation}
 \tf(\gb):=\lim_{t\to\infty}\frac{1}{t}\log Z_{t,\gb}^{\p}
\end{equation}
which is well defined, by superadditivity.

\begin{rem}\rm
In the case of a continuous time random walk $X$ with jump rate $1$ (of law $\bbP^X$),
we set $\tau_0=0$ and for all $i\ge 1$
\begin{equation}\label{toto} 
\begin{split}
 \tau'_i &:=\inf\{t>\tau_{i-1}, X_t\ne 0\},\\
 \tau_i &:=\inf\{t>\tau'_i, X_t= 0\}.
\end{split}\end{equation}
One can check that $(\tau_i-\tau'_i)_{i\ge 1}$ and $(\tau'_{i+1}-\tau_i)_{i\ge 0}$ are
independent i.i.d.\ sequences (for $d\ge 3$, there are only finitely many terms in the sequences)
that satisfies the assumptions given above. Therefore, our definition \eqref{eq:defmu} of $\mu_{t,\gb}^{\p}$ (with $Y$ replaced with $0$)
coincides with the one of \eqref{eq:defmu2}. This underlines two things:
\begin{itemize}
\item  The pinning model we present in this section is indeed a generalization of the pure (or annealed) model
for the random walk-pinning.
\item  In annealed random-walk pinning, the Gibbs transformation changes only the return time to zero and the time $X$ spends on zero. Conditionally on these times,
the law of the excursions out of the origin remains the same that under $\bbP^X$.
\end{itemize}
 
\end{rem}


We can describe the measure $\mu_{t,\gb}^{\p}$ in a very simple way, because we are interested
only in the law of $\cT\cap[0,t]$ (as it is the only part that is modified by the Gibbs transformation).
We introduce some definitions to describe the measure.

If $(1-\gb)^{-1}\int_0^{\infty} K(t)\dd t\ge 1$ or $\gb\ge 1$, let $b\ge 0$ be defined by
\begin{equation}
 (1-\gb+b)^{-1} \int_0^{\infty} e^{-b t} K(t)\dd t=1
\label{defb}
\end{equation}
and $b=0$ if $(1-\gb)^{-1}\int_0^{\infty} K(t) \dd t< 1$.\\
For notational reasons, define $\gl:=(1-\gb+b)$.
Then, we define $\tilde K^{\gb} (t):= \gl^{-1} e^{-b t} K(t)$ for $t\in(0,\infty)$, and $\tilde K^{\gb}(\infty)=1-\int_0^{\infty}\tilde K^{\gb} (t)\dd t$.
Finally, let $\tilde \mu^{\gb}$ be another probability law for $(\tau,\tau')$ defined by:
\begin{itemize}
 \item $\tau_0=0$ $\tilde \mu^{\gb}$-a.s.
 \item $(\tau'_{i+1}-\tau_i)_{i\ge0}$ and $(\tau_i-\tau'_{i})_{i\ge 1}$ are independent i.i.d. sequences,
 \item $\tau'_1$ is an exponential variable of mean $\gl^{-1}$,
 \item $\tau_1-\tau'_1$ has support $\bbR_+\cup\{\infty\}$. On $\bbR_+$, its law
is absolutely continuous w.r.t.\ Lebesgue measure with density $\tilde K^{\gb}(\cdot)$, and $\tilde \mu^{\gb}(\tau_1-\tau'_1=\infty)= \tilde K^{\gb}(\infty)$.
\end{itemize}

Let $\mathcal F_t$ denote the sigma algebra generated by $\cT\cap[0,t]$. We have the following lemma, describing the measure $\mu_{t,\gb}^{\p}$.
\begin{lemma}\label{lemmachin}
 For any $A\in \mathcal F_t$, one has
\begin{equation}\label{hohoho}
 \mu\left[\ind_{A} \, e^{\gb |\cT\cap [0,t]|} \ind_{\{t\in\cT\}}\right]=e^{bt}\tilde \mu^{\gb}(A\cap\{t\in\cT\}).
\end{equation}
As a consequence
\begin{equation}
 \mu_{t,\gb}^{\p}(A):= \tilde \mu^{\gb}(A\ | \ t\in\cT).
\end{equation}
\end{lemma}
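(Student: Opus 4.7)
The plan is to establish \eqref{hohoho} by an explicit density computation (an exponential tilt of the underlying alternating renewal structure), after which the ``consequence'' clause reduces to an immediate quotient.

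First I would decompose the event $\{t\in\cT\}$ according to the number $k\ge 0$ of complete contact intervals observed before time $t$, i.e.\ the largest $k$ with $\tau_k\le t$:
\[
 \{t\in\cT\}\;=\;\bigsqcup_{k\ge 0}\{\tau_k\le t<\tau'_{k+1}\}.
\]
On the $k$-th piece, $\cF_t$ is generated by $(\tau'_1,\tau_1,\dots,\tau'_k,\tau_k)$ together with $\ind_{\{\tau'_{k+1}>t\}}$. I would then write the joint density of this data under both $\mu$ and $\tilde\mu^{\gb}$, using that the contact durations $\tau'_i-\tau_{i-1}$ are i.i.d.\ exponential (rate $1$ under $\mu$, rate $\gl$ under $\tilde\mu^{\gb}$), the excursion durations $u_i:=\tau_i-\tau'_i$ are i.i.d.\ with density $K$ resp.\ $\tilde K^{\gb}(\cdot)=\gl^{-1}e^{-b\cdot}K(\cdot)$, and the last constraint $\{\tau'_{k+1}>t\}$ contributes a survival factor.

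Setting $\sigma:=|\cT\cap[0,t]|$, so that $\sum_{i=1}^{k}u_i=t-\sigma$, the $\mu$-density collapses to $e^{-\sigma}\prod_{i} K(u_i)$ and the $\tilde\mu^{\gb}$-density to
\[
 \gl^{k}e^{-\gl\sigma}\prod_{i}\gl^{-1}e^{-b u_i}K(u_i)\;=\;e^{-\gl\sigma-b(t-\sigma)}\prod_{i}K(u_i).
\]
Using the identity $\gl-b=1-\gb$ (which holds both when $b>0$ solves \eqref{defb} and when $b=0$ and $\gl=1-\gb$), this equals $e^{-bt}\,e^{(\gb-1)\sigma}$ times the $\mu$-density, i.e.\ exactly $e^{-bt}$ times the Gibbs-weighted $\mu$-density. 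Integrating against $\ind_A$ and summing over $k$ then gives \eqref{hohoho}. Choosing $A=\Omega$ specializes this to $Z^{\p}_{t,\gb}=e^{bt}\tilde\mu^{\gb}(t\in\cT)$, and the second display follows by forming the ratio, which identifies $\mu^{\p}_{t,\gb}$ with $\tilde\mu^{\gb}(\,\cdot\,\mid t\in\cT)$.

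The only delicate point is bookkeeping the endpoint conventions --- the last contact interval is truncated at $t$ and contributes only to $\sigma$, while the $(k{+}1)$-st excursion is unobserved and contributes only the survival factor --- and handling uniformly the delocalized case $b=0$ (where $\tilde K^{\gb}$ is defective, with an atom at $\infty$) and the localized case $b>0$ (where it is a proper density). No analytic estimate is needed: the lemma is a purely structural identity reflecting the fact that the Gibbs weight $e^{\gb|\cT\cap[0,t]|}$ acts as an exponential tilt on the alternating renewal measure.
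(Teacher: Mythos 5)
Your proposal is correct and follows essentially the same route as the paper: decompose $\{t\in\cT\}$ by the number of complete contact/excursion cycles before $t$, write the resulting joint densities under $\mu$ and $\tilde\mu^{\gb}$, and observe that the Gibbs weight $e^{\gb|\cT\cap[0,t]|}$ is absorbed as an exponential tilt via the identity $\gl-b=1-\gb$, yielding the factor $e^{bt}$; the conditional-law statement then follows by taking $A=\Omega$ and dividing. Your bookkeeping (collecting the contact time into $\sigma$ and the excursion times into $t-\sigma$) is a clean way to organize the same term-by-term verification the paper carries out in its displayed multiple integral.
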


\begin{proof}
 
We write  $Z_{t,\gb}^{\p}(A):=\mu\left[\ind_{A}\, e^{\gb |\cT\cap [0,t]|} \ind_{\{t\in\cT\}}\right]$,
and we decompose $Z_{t,\gb}^{\p}(A)$ according to the number of jumps made before $t$. As $A\in \mathcal F_t$, $\ind_A$ can
be written as a function of $(\{\tau_i\ | \tau_i<t\},\{\tau'_i\ | \tau'_i<t\})$ and one has the following integral form for $Z_{t,\gb}^{\p}(A)$,
\begin{eqnarray}
 \lefteqn{Z_{t,\gb}^{\p}(A)= \sum_{n=0}^\infty \int_{0\le t'_1\le t_1\le \dots \le t'_n\le t_n<t}\!\!\!\!\!\!\!\!\!\! \ind_A\,
       e^{(\gb-1)(t-t_n)}\prod_{i=1}^n e^{(\gb-1)(t'_i-t_{i-1})}K(t_i-t'_i) \dd t'_i \dd t_i } \nonumber\\
& = & e^{bt} \sum_{n=0}^{\infty} \int_{0\le t'_1\le t_1\le \dots \le t'_n\le t_n<t}\!\!\!\!\!\!\!\!\!\! \ind_A\,  \gl e^{\gl(t_n-t)}
           \prod_{i=1}^n  e^{\gl(t'_i-t_{i-1})}\tilde K^{\gb}(t_i-t'_i)\dd t'_i \dd t_i     \nonumber\\
&=& e^{bt} \tilde \mu^{\gb}(A\ \cap \ \{ t\in\cT \}).
\label{eq:decompZ}
\end{eqnarray}
\end{proof}

We can now prove some statements from Proposition \ref{annealed},

\begin{proposition}
We have, for $b$ defined as above in \eqref{defb}, $b=\tf(\gb)$. Moreover, 
$b$ can alternatively be defined by
\begin{equation}
\label{youpla}
\int_0^{\infty}e^{-bt} \mu(t\in \cT)\dd t:=\gb^{-1},
\end{equation}
if the equation has a solution and $b=0$ if not.
Let $\gb_c=\inf\{\gb, \tf(\gb)>0\}$, then
\begin{equation}
 \gb_c:=\left(\int_0^{\infty}\mu(t\in \cT)\dd t\right)^{-1}.
\end{equation}
Moreover, if $b>0$, or if $\gb=\gb_c$ and $\int_0^\infty  t K(t)\dd t<\infty$,
then
\begin{equation}\label{krt}
 \lim_{t\to \infty} \tilde \mu^{\gb}(t\in \cT)=\frac{1}{1+\int_0^\infty e^{-bt}t K(t)\dd t}\, .
\end{equation}
\end{proposition}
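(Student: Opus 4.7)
The starting point is Lemma~\ref{lemmachin}, which gives the key identity
\begin{equation*}
Z_{t,\gb}^{\p} \;=\; e^{bt}\,\tilde\mu^{\gb}(t \in \cT).
\end{equation*}
Every assertion of the proposition will be extracted from this identity by renewal-theoretic analysis.

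I would first establish the equivalence of \eqref{defb} and \eqref{youpla} and identify $\gb_c$. Setting $u(t) := \mu(t \in \cT)$ and $\hat K(p) := \int_0^{\infty} e^{-pt}K(t)\dd t$, conditioning on $\tau'_1$ and on the first excursion $\tau_1-\tau'_1$ yields the standard renewal equation
\begin{equation*}
u(t) \;=\; e^{-t} + \int_0^t e^{-s}\int_0^{t-s} K(r)\,u(t-s-r)\dd r\dd s,
\end{equation*}
whose Laplace transform gives $\hat u(p) = 1/(p + 1 - \hat K(p))$. Then $\hat u(b) = \gb^{-1}$ is equivalent to $\hat K(b) = 1 - \gb + b = \gl$, which is exactly \eqref{defb}. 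Evaluating at $b = 0$, one gets $\int_0^{\infty}\mu(t\in\cT)\dd t = \hat u(0) = 1/K(\infty)$, while the equation $\hat K(0) = 1-\gb$ holds iff $\gb = K(\infty)$; since $p \mapsto \hat u(p)$ is strictly decreasing, $K(\infty)$ is precisely the threshold above which $b>0$.

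For the identification $\tf(\gb) = b$, the upper bound $\tf(\gb) \le b$ is immediate from $\tilde\mu^{\gb}(t\in\cT) \le 1$. For the lower bound in the supercritical regime $b > 0$: the defining equation \eqref{defb} forces $\tilde K^{\gb}$ to be an honest probability density, so under $\tilde\mu^{\gb}$ the sequence $(\tau_i)_{i\ge 0}$ is a proper renewal process starting at $\tau_0 = 0$, whose inter-arrival distribution has mean $\gl^{-1}(1 + \int_0^{\infty} e^{-bt}tK(t)\dd t)$ (the $tK$-integral is finite since $te^{-bt/2}$ is bounded and $\int e^{-bt/2}K(t)\dd t \le \int K < \infty$). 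The key renewal theorem applied to the renewal equation that $\tilde\mu^{\gb}(t\in\cT)$ satisfies then yields
\begin{equation*}
\tilde\mu^{\gb}(t\in\cT) \;\xrightarrow[t\to\infty]{}\; \frac{1}{1 + \int_0^{\infty} e^{-bt}tK(t)\dd t} \;>\; 0,
\end{equation*}
which simultaneously proves $\tf(\gb) \ge b$ and the limit \eqref{krt}. In the subcritical regime ($b = 0$, $\gb < K(\infty)$), the general fact $\tf \ge 0$ recalled in Section~\ref{fw} combined with the upper bound forces $\tf(\gb) = 0 = b$. The identification $\tf = b$ in both regimes gives $\gb_c = K(\infty) = \left(\int_0^{\infty}\mu(t\in\cT)\dd t\right)^{-1}$.

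The remaining case of \eqref{krt}, namely $\gb = \gb_c$ with $\int tK < \infty$, is handled identically: although $b = 0$, the relation $\gb_c = K(\infty)$ makes $\tilde K^{\gb_c}$ honest with total mass $(1-\gb_c)^{-1}(1 - K(\infty)) = 1$, and the inter-arrival mean is finite by assumption, so the same renewal theorem applies verbatim. The main obstacle is the supercritical step: checking that the exponentially tilted excursion density $\tilde K^{\gb}$ is an honest probability density with finite mean (which follows from \eqref{defb} and the integrability of $K$) and that the renewal equation for $\tilde\mu^{\gb}(t\in\cT)$ fits the hypotheses (non-lattice, directly Riemann integrable forcing term) of the classical key renewal theorem.
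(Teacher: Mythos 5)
Your proof is correct and follows essentially the same route as the paper's: both rest on the identity $Z_{t,\gb}^{\p}=e^{bt}\tilde\mu^{\gb}(t\in\cT)$ from Lemma~\ref{lemmachin}, the key renewal theorem applied to the renewal equation for $\tilde\mu^{\gb}(t\in\cT)$ (which gives \eqref{krt} and hence $\tf(\gb)=b$ when the limit is positive), non-negativity of the free energy for the subcritical case, and the Laplace-transform identity $\int_0^\infty e^{-pt}\mu(t\in\cT)\dd t=(1+p-\hat K(p))^{-1}$ to establish \eqref{youpla} and the value of $\gb_c$ (the paper obtains this last identity via the explicit convolution series rather than the renewal equation, but the computation is identical).
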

\begin{rem}\rm
In the case of the homogeneous Random Walk Pinning Model,
one can get the asymptotics of $\tf(\gb)$ around $\gb_c$ given in Proposition \ref{annealed}, by using the local central limit
Theorem for $X_t$ (see \cite[Prop. 7.9, Ch. II]{Spitz} for the
discrete time version, the proof being identical for continuous time). We have
\begin{equation}
 \mu(t\in \cT)=p_t(0)=(cst.+o(1))t^{-d/2}.
\label{eq:lclt}
\end{equation}
Then, Proposition \ref{annealed} follows from \eqref{youpla},
and an application of an Abelian theorem (see \cite[Theorem 2.1]{Book} for the discrete case). 
\label{rem:asympFann}
\end{rem}

\begin{proof}
 We start with the proof of the last item.
 Thanks to the Markov property, one has the following recursion equation 
\begin{multline}
\tilde \mu^{\gb} (t\in \cT)= \tilde\mu^{\gb}(\tau'_1\ge t)+ \int_0^t  \tilde \mu^{\gb}\left(t\in \cT,\ \tau_1\in[s,s+\dd s) \right) \dd s \\
= \exp(-\gl t)+\int_0^t \tilde\mu^{\gb}(\tau_1\in[s,s+\dd s)) \tilde\mu^{\gb} (t-s\in \cT) \dd s.
\end{multline}
By the key renewal theorem, \cite[Theorem 4.7, Ch. V]{asm}, one has
\begin{equation}
 \lim_{t\to\infty}\tilde \mu^{\gb} (t\in \cT):=\frac{\int_0^\infty e^{-\gl t}\dd t}{\gl^{-1}+\int_0^\infty  t \tilde K^{\gb}(t)\dd t}=
  \frac{1}{1+\int_0^{\infty} e^{-bt}t K(t)\dd t}.
\end{equation}
When $b>0$ or $K(t)$ is integrable, the limit is positive. In that case equation \eqref{hohoho} with $A$ equals $\gO$ to the full space gives 
\begin{equation}
 Z_{t,\gb}^{\p}=(cst.+o(1))\exp(bt),
\end{equation}
so that $b=\tf(\gb)$.
For all the other cases, we have necessarily $\tf(\gb)\le 0$ as $Z_{t,\gb}^{\p}\le \gl^{-1}$. To get that $\tf(\gb)= 0$ it is
therefore sufficient to prove that $\tf(\gb)$ is non-negative.
This is done for the random-walk pinning in \cite{BS08}, here it could be done using
the assumption $\int_0^{\infty} e^{\gep t}K(t)\dd t=\infty$ for all $\gep$ (which is also necessary).

\medskip

Now, we turn to the proof of \eqref{youpla}.
Let $K_1(t)=e^{-t}$ be the density with respect to the Lebesgue measure of $\tau'_1$ (under $\mu$).
For $t> 0$,
\begin{equation}
 \mu(t\in \cT)=e^{-t}+\int_0^t \sum_{n=1}^{\infty} \mu(\tau_n\in [s,s+\dd s)) e^{-(t-s)}\dd s=\sum_{n=0}^{\infty} [(K_1 \ast K)^{\ast n}\ast K_1] (t).
\end{equation}
Therefore using the fact that Laplace transform transforms convolutions into products, one obtains, for all $b>0$

\begin{eqnarray}
 \int_0^{\infty} e^{-bt} \mu(t\in \cT)\dd t &=& \sum_{n=0}^{\infty} \left(\int_0^{\infty} e^{-(b+1)t}\dd t\right)^{n+1}\left(\int_0^{\infty}e^{-bt}K(t)\dd t\right)^n\nonumber\\
  &=& \frac{1}{1+b}\frac{1}{1-\frac{1}{1+b}\int_0^{\infty}e^{-bt} K(t)\dd t},
\end{eqnarray}
which with \eqref{defb} gives us the right result (the case $\tf(\gb)=0$ is obtained by continuity and non-negativity of the free-energy).
The value of $\gb_c$ is then an easy consequence.
\end{proof}
We now give a Corollary that describes the local intersection time $L_t(X,0)$ under $\mu_{t,\gb}^{\p}$.
\begin{cor}
When $b>0$ or when $\gb=\gb_c$ and $\int_0^{\infty} t K(t)\dd t<\infty$,
$\frac{|\cT\cap [0,t]|}{t}$ under $\mu_{t,\gb}^{\p}$ converges in probability to
\begin{equation}
 \frac{1}{1+\int_0^{\infty} e^{-bt} t K(t)\dd t} >0 .
\end{equation}
 \label{cor:purecontact}
\end{cor}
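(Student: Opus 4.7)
The plan is to reduce the claim to a law-of-large-numbers statement for a renewal-reward process under the un-conditioned measure $\tilde\mu^\gb$, and then transfer it to $\mu_{t,\gb}^{\p}$ using Lemma~\ref{lemmachin} and the fact, already established, that $\tilde\mu^\gb(t\in\cT)$ has a positive limit.

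First I would observe that under $\tilde\mu^\gb$ the pairs $(A_i,B_i):=(\tau'_{i+1}-\tau_i,\ \tau_{i+1}-\tau'_{i+1})$, $i\ge 0$, form an i.i.d.\ sequence in which $A_i$ is exponential with mean $\gl^{-1}$ and $B_i$ has density $\tilde K^\gb(\cdot)=\gl^{-1}e^{-bt}K(t)$. In both the case $b>0$ and the case $\gb=\gb_c$ with $\int_0^\infty tK(t)\dd t<\infty$, the expectation
\[
\mathbb{E}_{\tilde\mu^\gb}[A_1+B_1]=\gl^{-1}+\int_0^\infty t\,\tilde K^\gb(t)\dd t
=\gl^{-1}\Big(1+\int_0^\infty e^{-bt}tK(t)\dd t\Big)
\]
is finite (the integral converges thanks to the exponential tilt when $b>0$, and by direct assumption when $b=0$), and in the first case $\tilde K^\gb$ has no mass at $\infty$ so the $B_i$'s are finite $\tilde\mu^\gb$-a.s.

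Next I would apply the elementary renewal theorem to the renewal $(\tau_n)_{n\ge 0}$ under $\tilde\mu^\gb$: denoting $N(t):=\max\{n\ :\ \tau_n\le t\}$, one has $N(t)/t\to 1/\mathbb{E}_{\tilde\mu^\gb}[A_1+B_1]$ a.s., and the strong law of large numbers gives $\sum_{i=0}^{N(t)-1}A_i / N(t)\to \gl^{-1}$ a.s. Writing $|\cT\cap[0,t]|=\sum_{i=0}^{N(t)-1}A_i + R(t)$ with a boundary term $0\le R(t)\le A_{N(t)}$, and noting that $A_{N(t)}/t\to 0$ a.s.\ (since the maximum of $N(t)$ i.i.d.\ exponentials grows only logarithmically), I get
\[
\frac{|\cT\cap[0,t]|}{t}\xrightarrow[t\to\infty]{\tilde\mu^\gb\text{-a.s.}}
\frac{\gl^{-1}}{\gl^{-1}+\int_0^\infty t\,\tilde K^\gb(t)\dd t}
=\frac{1}{1+\int_0^\infty e^{-bt}tK(t)\dd t}=:c.
\]

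Finally I would transfer this convergence to $\mu_{t,\gb}^{\p}$. By Lemma~\ref{lemmachin}, for every $\gep>0$,
\[
\mu_{t,\gb}^{\p}\Big(\Big|\tfrac{|\cT\cap[0,t]|}{t}-c\Big|>\gep\Big)
=\frac{\tilde\mu^\gb\Big(\big|\tfrac{|\cT\cap[0,t]|}{t}-c\big|>\gep,\ t\in\cT\Big)}{\tilde\mu^\gb(t\in\cT)}
\le\frac{\tilde\mu^\gb\Big(\big|\tfrac{|\cT\cap[0,t]|}{t}-c\big|>\gep\Big)}{\tilde\mu^\gb(t\in\cT)}.
\]
The numerator tends to $0$ by the previous step, while by \eqref{krt} the denominator converges to the strictly positive limit $c$; hence the ratio tends to $0$, giving the desired convergence in probability.

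The only genuinely delicate point is the case $\gb=\gb_c$, $b=0$: the assumption $\int tK(t)\dd t<\infty$ is exactly what is needed to keep the mean cycle length finite and to ensure that $\tilde\mu^\gb(t\in\cT)$ has a positive limit, so the renewal-reward argument and the conditioning step both remain valid. Everything else is routine application of the SLLN and the renewal theorem.
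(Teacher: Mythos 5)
Your proposal is correct and follows essentially the same route as the paper: an almost-sure law of large numbers (renewal--reward) for $|\cT\cap[0,t]|/t$ under $\tilde\mu^{\gb}$, followed by the transfer to $\mu_{t,\gb}^{\p}$ using Lemma~\ref{lemmachin} and the fact that $\tilde\mu^{\gb}(t\in\cT)$ stays bounded away from zero by \eqref{krt}. You merely spell out the details (finiteness of the mean cycle length, the boundary term, the recurrence of the tilted renewal) that the paper leaves implicit.
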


\begin{proof}
As $\int_0^{\infty} t K(t)\dd t<\infty$, the law of large numbers (applied first for the renewal
process $\tau$ and then to the sum of independent exponential times) tells us that
\begin{equation}
\lim_{t\to \infty} \frac{|\cT\cap [0,t]|}{t}=\frac{1}{1+\int_0^\infty e^{-bt} t K(t) \dd t}, \indent  \tilde \mu^{\gb}-a.s,
\end{equation}
and therefore the convergence also holds in probability.
 
Restricted on $\mathcal F_t$, the measure $\mu_{t,\gb}^{\p}$ is equal to $\tilde\mu^{\gb}(\cdot \ | t\in \cT)$ and
we also have that $\tilde  \mu^{\gb}(t\in \cT)$ is bounded away from zero by \eqref{krt}. This gives us that
the law of $\frac{|\cT\cap [0,t]|}{t}$ under $\mu_{t,\gb}^{\p}$ converges in probability to the same limit.
\end{proof}

\begin{rem}\rm \label{temloc}
In dimension $d$, as noted in Remark \ref{rem:asympFann}, the local central limit Theorem gives
\begin{equation}\label{tail}
\mu(t\in \cT)= \bbP^X(X_t=0)=(cst.+o(1))t^{-d/2}.
\end{equation}
For $d\ge 5$ this implies that $\int_0^{\infty} t K(t)\dd t<\infty$. Indeed for $t$ large enough
\begin{equation}
 \bbP^X(X_t=0)\ge \int_{0<t_1<t_2<t} e^{-t_1}K(t_2-t_1)e^{-(t-t_2)}\dd t_1 \dd t_2\ge cst. \int_{t-2}^{t-1} K(s) \dd s,
\end{equation}
so that $\int_{t-2}^{t-1} K(s)\dd s =O(t^{-d/2})$. Therefore, one can apply Corollary \ref{cor:purecontact} to get that $L_t(X,0)$ is of order $t$ for $\gb=\gb_c$.
\end{rem}

We present here an advanced version of \eqref{tralala}, which was used for the proof of the upper bound in Theorem \ref{lowtemp}.

\begin{lemma}\label{cocorico}
 For any value of $t$, for any random walk $X$ with jump rate $1$ one has
\begin{equation}
 e^{(\gb-1)t} \le \bbE^X\left[e^{L_t(X,0)}\right]\le e^{\left(\gb-1+\frac{1}{\gb}\right)t}\left(1+\frac{1}{\gb}\right).
\end{equation}
\end{lemma}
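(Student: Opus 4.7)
The plan is to handle the two inequalities separately. The lower bound follows immediately by restricting the expectation to the event that $X$ makes no jump in $[0,t]$: this event has probability $e^{-t}$, and on it $L_t(X,0)=t$, so $\bbE^X[e^{\gb L_t(X,0)}]\ge e^{\gb t}\cdot e^{-t}=e^{(\gb-1)t}$.

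For the upper bound, I set $F(t):=\bbE^X[e^{\gb L_t(X,0)}]$ and $Z(t):=\bbE^X[e^{\gb L_t(X,0)}\ind_{X_t=0}]$. The starting point is the identity $F(t) = 1 + \gb \int_0^t Z(s)\,\dd s$, obtained by integrating $\frac{\dd}{\dd s}e^{\gb L_s(X,0)}=\gb\ind_{X_s=0}e^{\gb L_s(X,0)}$ and taking expectations. By the Markov-property argument of Remark \ref{rem:superad} applied with $Y\equiv 0$, the function $Z$ is super-multiplicative, so $\log Z(t)/t$ is superadditive with supremum $b:=\tf(\gb)$ equal to the pure pinning free energy, and in particular $Z(t)\le e^{bt}$ for every $t\ge 0$.

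Next I would control $b$ itself via the framework of Section \ref{sec:homo}, applied with $K=K^*$ the density of the first-return time $E$ of $X$ to $0$. Equation \eqref{defb} then reads $\hat K^*(b)=1-\gb+b$, i.e.\ $b=\gb-1+\bbE[e^{-bE}\ind_{E<\infty}]$. The crucial observation is that because $X$ has jump rate $1$ at every site, immediately after its first jump away from $0$ the walk remains at a neighbor of $0$ for an $\mathrm{Exp}(1)$ sojourn time $V$ before any further move; hence $E\ge V$ almost surely and $\hat K^*(b)\le\bbE[e^{-bV}]=1/(b+1)$. Substituting back gives the quadratic inequality $b^2+(2-\gb)b-\gb\le 0$, whose positive root satisfies $b\le\frac{\gb-2+\sqrt{\gb^2+4}}{2}\le\gb-1+1/\gb$, where the last step uses $\sqrt{\gb^2+4}\le\gb+2/\gb$.

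Combining the bounds, $Z(s)\le e^{(\gb-1+1/\gb)s}$ and so $F(t)\le 1+\gb\int_0^t e^{(\gb-1+1/\gb)s}\,\dd s=(1-c)+c\,e^{(\gb-1+1/\gb)t}$ with $c:=\gb^2/(\gb^2-\gb+1)$. A short computation yields $1+1/\gb-c=1/(\gb(\gb^2-\gb+1))$ and $1-c=(1-\gb)/(\gb^2-\gb+1)$, so the target inequality $(1-c)+c\,e^{(\gb-1+1/\gb)t}\le(1+1/\gb)e^{(\gb-1+1/\gb)t}$ reduces to $\gb(1-\gb)\le e^{(\gb-1+1/\gb)t}$, which is obvious since the right-hand side is $\ge 1$ and $\gb(1-\gb)\le 1/4$ for $\gb\in(0,1)$ (and $\le 0$ for $\gb\ge 1$). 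The main obstacle, and the reason the proof is not entirely elementary, is establishing the sharp estimate $\hat K^*(b)\le 1/(b+1)$: a naive application of $Z(s)\le F(s)$ together with Gr\"onwall only yields $F(t)\le e^{\gb t}$, and the quantitative identification of $b$ via \eqref{defb} is precisely what delivers the essential $1/\gb$ correction in the exponent.
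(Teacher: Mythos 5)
Your proof is correct, but it takes a genuinely different route from the paper's for the upper bound (the lower bound is identical). The paper works directly with the excursion decomposition of the partition function: it expands $\bbE^X[e^{\gb L_t(X,0)}]$ over the successive departure and return times to $0$, uses the elementary pointwise bound $K(t)\le 1$ on the first-return density to bound each integral $\int e^{-\gb(t_i'-t_{i-1})}K(t_i-t_i')\dd t_i'$ by $1/\gb$, and sums the resulting exponential series to get $(1+\tfrac1\gb)e^{(\gb-1+1/\gb)t}$ in one self-contained computation. You instead (i) reduce the free partition function to the pinned one via the exact identity $F(t)=1+\gb\int_0^t Z(s)\dd s$, (ii) bound $Z(s)\le e^{\tf(\gb)s}$ by superadditivity, and (iii) control $\tf(\gb)$ through the implicit equation \eqref{defb}, sharpened by the stochastic domination $E\geq V$ with $V\sim\mathrm{Exp}(1)$ (the sojourn at the first visited neighbour), which gives $\hat K(b)\le 1/(1+b)$ and hence $b\le\gb-1+1/\gb$; I checked the quadratic and the final algebra and they are right, and the trivial case $b=0$ is also covered. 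What your approach buys is a transparent explanation of where the $1/\gb$ correction comes from and the clean intermediate statement $\tf(\gb)\le\gb-1+1/\gb$; what it costs is self-containedness, since step (iii) leans on the appendix's Proposition identifying $b=\tf(\gb)$, whose proof uses the key renewal theorem and, in the degenerate case, the non-negativity of the free energy imported from \cite{BS08} -- whereas the paper's direct bound needs none of this and, as you yourself note, the naive Gr\"onwall shortcut that avoids \eqref{defb} would only give $e^{\gb t}$. Both arguments apply to any jump-rate-one walk, since each ultimately rests on the same structural fact that an excursion away from $0$ lasts at least an $\mathrm{Exp}(1)$ time.
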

\begin{proof}
The left hand side inequality is simply obtained by considering the contribution of trajectories that never jumps.
To obtain the other inequality we decompose the partition function according to the time $t'_i$, $t_i$ that are respectively the $i$-th jump out of zero,
and the $i$-th return to zero. We write $Z_{t,\gb}=\bbE^X\left[e^{L_t(X,0)}\right]$,
\begin{multline}
 Z_{t,\gb}=e^{(\gb-1)t}\sum_{n=0}^{\infty}\int_{0<t'_1<t_1<\dots<t_n<t}\prod_{i=1}^n e^{-\gb(t'_i-t_{i-1})} K(t_i-t'_i) \dd t'_i \dd t_i \\
\left[1+\int_{t_n}^t e^{-\gb(t-t'_{n+1})}\dd t'_{n+1} \int_t^{\infty} K(t_{n+1}-t'_{n+1})\dd t_{n+1}\right]. 
\end{multline}
Then one remark that for the random walk $K(t)\le 1$ for all $t$. Indeed the probability that after the
first jump, the first excursion returns within a time in the interval $[t,t+\dd t]$ (which is equal to $K(t)\dd t$) is smaller
than the probability that $X$ makes a jump in the interval $[t,\dd t]$ (which is equal to $\dd t$). Hence

\begin{equation}\begin{split}
\int_{t_{i-1}}^{t_{i}}e^{-\gb(t'_i-t_{i-1})} K(t_i-t'_i)\dd t'_i &\le \frac{1}{\gb},\\
\int_{t_n}^t e^{-\gb(t-t'_{n+1})}\dd t'_{n+1}&\le \frac{1}{\gb}\\
\int_t^{\infty} K(t_{n+1}-t'_{n+1})\dd t_{n+1}&\le 1.
\end{split}\end{equation}
Therefore
\begin{eqnarray}
  Z_{t,\gb}&\le& e^{(\gb-1)t}\sum_{n=0}^{\infty}\int_{0<t_1<\dots<t_n}\frac{1}{\gb^n}\left(1+\frac{1}{\gb}\right)\dd t_1\dots\dd t_n \nonumber\\
  &= & e^{(\gb-1)t}\left(1+\frac{1}{\gb}\right)\sum_{n=0}^{\infty} \frac{t^n}{\gb^n n!}
\end{eqnarray}
which is exactly the result.
\end{proof}

{\bf Acknowledgements}: 
The authors would like to thank F.L.\ Toninelli for his constant support in this project and his precious advice.
Q.B.\ is very grateful to the Mathematics Department of Università di Roma Tre for hosting him while working on this project.
H.L. acknowledges the support of ECR grant PTRELSS.

\end{document}